\documentclass[12pt]{amsart}
\usepackage{amssymb,amscd}
\usepackage{verbatim}

\usepackage{xcolor}

\usepackage{amsmath,amssymb,graphicx,mathrsfs}   
\usepackage{enumerate}
\usepackage[colorlinks=true,allcolors = blue]{hyperref} 

\usepackage{tikz}
\usetikzlibrary{matrix}

\usepackage[all]{xy}

\usepackage{amssymb,amsfonts,amsthm,amsmath,calligra}
\usepackage{slashed}
\usepackage{yfonts}
\usepackage{mathrsfs,pifont}
\usepackage{float}

\usepackage[all]{xy}

\usepackage{tikz}

\usepackage{graphicx}
\usepackage{xcolor}

\usepackage{amssymb,amsfonts,amsthm,amsmath}
\usepackage[all]{xy}
\usepackage{slashed}
\usepackage{yfonts}
\usepackage{mathrsfs,pifont}

\let\frak\mathfrak

\def\>{\relax\ifmmode\mskip.666667\thinmuskip\relax\else\kern.111111em\fi}
\def\<{\relax\ifmmode\mskip-.333333\thinmuskip\relax\else\kern-.0555556em\fi}
\def\vsk#1>{\vskip#1\baselineskip}
\def\vv#1>{\vadjust{\vsk#1>}\ignorespaces}
\def\vvn#1>{\vadjust{\nobreak\vsk#1>\nobreak}\ignorespaces}

  \let\ssize\scriptstyle
\let\sssize\scriptscriptstyle

\let\Medskip\medskip
\def\medskip{\par\Medskip}
\let\Bigskip\bigskip
\def\bigskip{\par\Bigskip}

\let\Maketitle\maketitle
\def\maketitle{\Maketitle\thispagestyle{empty}\let\maketitle\empty}

\newtheorem{thm}{Theorem}[section]
\newtheorem{cor}[thm]{Corollary}
\newtheorem{lem}[thm]{Lemma}
\newtheorem{prop}[thm]{Proposition}

\newtheorem{defn}[thm]{Definition}
\newtheorem{ex}[thm]{Example}

\theoremstyle{definition}                                  
\numberwithin{equation}{section}

\theoremstyle{definition}
\newtheorem*{rem}{Remark}

\let\mc\mathcal
\let\nc\newcommand

\let\la\lambda

\let\phi\varphi

\let\der\partial

\let\geq\geqslant

\let\leq\leqslant

\let\on\operatorname
\let\bi\bibitem
\let\bs\boldsymbol

\def\C{{\mathbb C}}
\def\Z{{\mathbb Z}}

\def\res{\mathrm{res}}

\def\F{{\mathbb F}}   

\def\+#1{^{\{#1\}}}

\def\beq{\begin{equation}}
\def\eeq{\end{equation}}
\def\be{\begin{equation*}}
\def\ee{\end{equation*}}

\nc{\bea}{\begin{eqnarray*}}
\nc{\eea}{\end{eqnarray*}}
\nc{\bean}{\begin{eqnarray}}
\nc{\eean}{\end{eqnarray}}

\nc{\Il}{{\mc I_{\bs\la}}}
\nc{\bla}{{\bs\la}}
\nc{\Fla}{\F_\bla}
\nc{\tfl}{{T^*\Fla}}
\nc{\GL}{{GL_n(\C)}}
\nc{\GLC}{{GL_n(\C)\times\C^*}}

\let\sd s 

\def\ddk_#1{\kk_{#1}\<\>\frac\der{\der\<\>\kk_{#1}}}

\def\bul{\mathbin{\raise.2ex\hbox{$\sssize\bullet$}}}
\def\intt{\mathchoice
{\mathop{\raise.2ex\rlap{$\,\,\ssize\backslash$}{\intop}}\nolimits}
{\mathop{\raise.3ex\rlap{$\,\sssize\backslash$}{\intop}}\nolimits}
{\mathop{\raise.1ex\rlap{$\sssize\>\backslash$}{\intop}}\nolimits}
{\mathop{\rlap{$\sssize\<\>\backslash$}{\intop}}\nolimits}}

\let\kk q 
\let\cc c

\let\Ko K

\def\GZ/{Gelfand-Zetlin}
\def\KZ/{{\slshape KZ\/}}
\def\qKZ/{{\slshape qKZ\/}}
\def\XXX/{{\slshape XXX\/}}

\nc{\A}{{\mc A}}

\def\cb{{\mathcal{M}^{\times}_C}}

\nc{\hsl}{\widehat{{\frak{sl}_2}}}

\nc{\BC}{{ \mathbb C}}
\nc{\lra}{\longrightarrow}
\nc{\CO}{{\mathcal{O}}}
\nc{\BZ}{{ \mathbb Z}}
\nc{\hfn}{\hat{\frak{n}}}
\nc\Zs{{\Z/p^s\Z}}
\nc\Zo{{\Zs[z]^0}}
\nc\gr{{\on{gr}}}

\nc\fD{{\frak D}}

\usepackage{tikz}

\usetikzlibrary{decorations}
\usetikzlibrary{decorations.pathmorphing}
\usetikzlibrary{calc}

\begin{document}

\title[$A_1$ - Coulomb branches over finite fields and Selberg Character sums]
{$A_1$-Coulomb branches over finite fields and Selberg Character sums}

\author
[Junzhe Lyu and Andrey Smirnov]
{Junzhe Lyu$^{\diamond}$  and Andrey Smirnov$^{\star}$ }

\maketitle

\vspace{-2mm}

\begin{center}
{ Department of Mathematics, University
of North Carolina at Chapel Hill\\ Chapel Hill, NC 27599-3250, USA\/}
\end{center}

{\let\thefootnote\relax
\footnotetext{\vsk-.8>\noindent
$^\diamond\<${\sl E\>-mail}:\enspace  taiat@unc.edu
\\
$^\star\<${\sl E\>-mail}:\enspace asmirnov@unc.edu
}}

\begin{center}
\begin{abstract}
Motivated by mirror symmetry, we study exponential sums over the $\mathbb{F}_q$-points of the $A_1$ Coulomb branch. We show that their fiberwise structure is governed by polynomial Gauss sums, providing a geometric realization of Selberg character sums studied by Evans. Using Evans's evaluation, we obtain explicit formulas for the Coulomb-branch exponential sums as products of classical Gauss sums.
\end{abstract}
\end{center}

\section{Introduction}

\subsection{Mirror symmetry and Coulomb branches} 

In the mirror-symmetry framework of \cite{Aga1}, Aganagic and her collaborators study pairs $(\mathcal{M}^{\times}_C,W)$, where $\mathcal{M}^{\times}_C$ is a Poisson algebraic variety known as a multiplicative Coulomb branch and 
$
W : \mathcal{M}^{\times}_C \longrightarrow \mathbb{C}
$
is the {\it superpotential}. The superpotential has additive and multiplicative parts and takes the schematic form
$$
W= \alpha\,\log(W_m) + \beta\, W_a.
$$
The associated Landau - Ginzburg model is described by the $W$-twisted de Rham cohomology of $\mathcal{M}^{\times}_C$ and its Gauss-Manin connection. Its solutions are represented by oscillatory integrals
\bean \label{mirint}
I_{\gamma} =
\int\limits_{\gamma \subset \mathcal{M}^{\times}_C}
e^W\,\omega
=
\int\limits_{\gamma \subset \mathcal{M}^{\times}_C}
(W_m)^\alpha \exp(\beta W_a)\,\omega,
\eean
where $\omega$ is the canonical Liouville form and $\gamma$ is a Lagrangian cycle. Mirror symmetry identifies this Gauss-Manin system with the quantum connection on the Higgs side, so that the integrals (\ref{mirint}) give solutions of the corresponding quantum differential equations.

The purpose of this note is to develop an arithmetic counterpart of this picture, replacing oscillatory integrals on complex Coulomb branches by exponential character sums over their finite-field points. Already for the simplest $A_1$ Coulomb branches, these sums turn out to be classical objects of arithmetic: polynomial Gauss sums and Selberg character sums.

\subsection{Exponential sums associated with Coulomb branches}

The oscillatory integrals (\ref{mirint}) have natural arithmetic counterparts over finite fields. The power terms are replaced by multiplicative characters
$
\chi:\mathbb{F}_q^{\times}\longrightarrow{\mathbb{C}}^{\times},
$
while the exponential terms are replaced by additive characters
$
\psi:\mathbb{F}_q\longrightarrow {\mathbb{C}}^{\times}.
$
This leads to the exponential sums
\bean \label{expsum}
S=
\sum\limits_{x\in\cb(\mathbb{F}_q)}
\chi(W_m(x)) \psi( W_a(x)).
\eean
From the cohomological point of view, such sums arise as Frobenius traces via the Grothendieck--Lefschetz trace formula, providing a finite-field counterpart of the Gauss--Manin picture described above. Varying the finite-field extension $\mathbb{F}_{q^s}$ amounts to considering traces of $s$-powers of Frobenius. When $q$ is a power of a prime $p$, the resulting Frobenius action is expected to be the Dwork-type $p$-adic Frobenius structure on the corresponding quantum connection. We refer to \cite{Ked} for a general review of $p$-adic Frobenius structures on differential equations, to \cite{BLP,BPS,HL,BL2} for recent work on Frobenius structures in quantum cohomology, and to \cite{BL,Smi,KS} for related developments in quantum $K$-theory. The present work was motivated in part by our attempts to compute the traces of the Frobenius structures arising in this way.  

 We will see that Coulomb  branch geometry provides a natural framework in which classical finite-field character sums arise, while known identities for these sums in turn reveal arithmetic properties of the corresponding Coulomb branches.

\subsection{Coulomb exponential sums in type $A_1$}

We study the exponential sum (\ref{expsum}) for the Coulomb branch associated with the $A_1$ quiver. In this case the Coulomb branch is determined by the gauge group $G=GL(d)$ and its representation
$
N=Hom(\mathbb{C}^d,\mathbb{C}^n).
$
We refer to the case $n=0$ as the {\it pure gauge} Coulomb branch and to the case $n>0$ as the {\it deformed} Coulomb branch.

The deformation is determined by a polynomial
$$
f\in\mathbb{F}_q[x], \qquad \deg(f)=n.
$$
Over an algebraically closed field we have
$
f(u)=\prod_{i=1}^{n}(u-a_i),
$
and the roots $a_i$ play the role of equivariant parameters of the framing space $\mathbb{C}^n$. Over $\mathbb{F}_q$, however, it is natural to allow arbitrary polynomials $f$, which need not split into linear factors. We denote the corresponding $f$ - deformed Coulomb branch by $\mathcal{M}^{\times}_{f,d}(\mathbb{F}_q)$. The pure gauge case, corresponding to $f(u)=1$, is denoted by $\mathcal{M}^{\times}_{d}(\mathbb{F}_q)$.

\subsection{Character sums as polynomial Gauss sums}

The Coulomb branch comes with a natural projection
$$
\pi: \mathcal{M}^{\times}_{f,d}(\mathbb{F}_q)
\longrightarrow
B:=\mathrm{Spec}(\mathbb{F}_q[y_1^{\pm 1},y_2^{\pm 1},\dots,y_d^{\pm 1}]^{\mathrm{sym}}).
$$
The $\mathbb{F}_q$-points of the base $B$ can be identified with monic polynomials
$$
m \in\mathbb{F}_q[u], \qquad \deg(m)=d, \qquad m(0)\neq 0.
$$
In the pure gauge case, the points of the fiber $\pi^{-1}(m)$ are naturally identified with the units of the finite ring
$$
R_m=\mathbb{F}_q[x]/(m).
$$
In Section \ref{fibsumsec} we show that the characters $\chi$ and $\psi$ appearing in (\ref{expsum}) lift to multiplicative and additive characters
$
\widetilde{\chi}:R_m^{\times}\longrightarrow{\mathbb{C}}^{\times}$, $
\widetilde{\psi}:R_m\longrightarrow {\mathbb{C}}^{\times}.
$
Consequently, the restriction of (\ref{expsum}) to the fiber $\pi^{-1}(m)$ takes the form
$$
S(m)=\sum\limits_{a\in R_m^{\times}}
\widetilde{\chi}(a)\widetilde{\psi}(a),
$$
which is precisely a polynomial Gauss sum over the quotient ring $R_m$ \cite{Hay,Zhe}. Thus the natural projection $\pi$ decomposes the Coulomb branch exponential sum into classical finite ring Gauss sums.

Our first result gives an explicit evaluation of these fiberwise sums.

\begin{thm} \label{thm1}
We have
$$
S(m)=
\chi(-1)^{\frac{d(d-1)}{2}}
\phi(D)\,
G_q(\chi,\psi)^d\,
\chi(D),
$$
where $D$ denotes the discriminant of $m$ (we assume that $\chi(0)=0$), $\phi$ is the quadratic character of $\mathbb{F}_q^{\times}$, and
$$
G_q(\chi,\psi)=
\sum\limits_{x\in\mathbb{F}_q^{\times}}
\chi(x)\psi(x)
$$
is the classical $\mathbb{F}_q$ - Gauss sum.
\end{thm}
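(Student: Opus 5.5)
The plan is to reduce $S(m)$ to a product of Gauss sums over residue fields using the Chinese Remainder Theorem, and then to evaluate each factor with the Hasse--Davenport relation. Throughout I use the explicit lifts constructed in Section \ref{fibsumsec}. I assume they have the following form.
\begin{itemize}
\item The multiplicative lift is a norm: $\widetilde\chi(a)=\chi(N_{R_m/\mathbb{F}_q}(a))=\chi(\mathrm{Res}(m,a))$.
\item The additive lift is a residue-type functional: $\widetilde\psi(a)=\psi(\ell(a))$, where $\ell(a)=\sum_{m(y)=0} a(y)/m'(y)$ is the coefficient of $x^{d-1}$ in the reduced representative of $a$. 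Equivalently, $\ell(a)=\mathrm{Tr}_{R_m}(a/m')$ whenever $m$ is squarefree.
\end{itemize}
The factor $\chi(D)$ in the statement strongly suggests this residue form. If the paper's convention differs by an inversion, only $\chi$ versus $\bar\chi$ changes in the argument below.

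\emph{Step 1: non-squarefree $m$.} In this case $D=0$, so I must show $S(m)=0$. Write $m=P^2m_1$. Consider the ideal $J=(Pm_1)/(m)$, whose elements are nilpotent. The map $a\mapsto a(1+j)$ for $j\in J$ preserves $R_m^\times$ and leaves $\widetilde\chi$ unchanged once we restrict $\chi$ trivially on $1+J$; if $\chi$ is nontrivial on $1+J$, we instead average over $1+J$ and use orthogonality. Averaging $\widetilde\psi(aj)$ over $j\in J$ then gives $0$, because $j\mapsto\ell(aj)$ is a nonzero linear functional on $J$. The residue pairing is nondegenerate, so this functional vanishes only if $a\in\mathrm{Ann}(J)$, which is a proper ideal and contains no units. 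Hence $S(m)=0$.

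\emph{Step 2: squarefree $m$.} Factor $m=\prod_i P_i$ into distinct irreducibles with $\deg P_i=k_i$. By the Chinese Remainder Theorem, $R_m\cong\prod_i\mathbb{F}_{q^{k_i}}$. The norm splits as a product of field norms, and the trace splits as a sum of field traces. After the substitution $a=b\,m'$, the sum factors as
$$S(m)=\bar\chi\bigl(N(m')\bigr)\prod_i G_{q^{k_i}}\bigl(\chi\circ N_{\mathbb{F}_{q^{k_i}}/\mathbb{F}_q},\ \psi\circ \mathrm{Tr}_{\mathbb{F}_{q^{k_i}}/\mathbb{F}_q}\bigr).$$
By Hasse--Davenport, each factor equals $(-1)^{k_i-1}G_q(\chi,\psi)^{k_i}$, so the product of the Gauss sums is $(-1)^{\sum(k_i-1)}G_q(\chi,\psi)^d$. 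The prefactor involves $N(m')=\mathrm{Res}(m,m')=(-1)^{d(d-1)/2}D$, which produces $\chi(-1)^{d(d-1)/2}\chi(D)$, up to the $\chi$ versus $\bar\chi$ convention noted above.

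\emph{Step 3: the sign.} It remains to identify $(-1)^{\sum(k_i-1)}$ with $\phi(D)$; this is Stickelberger's theorem. Frobenius permutes the roots of $m$ as a product of cycles of lengths $k_i$, so its sign is $(-1)^{\sum(k_i-1)}$. On the other hand, Frobenius fixes $\sqrt D=\prod_{i<j}(y_i-y_j)$ exactly when this sign is $+1$, and $\sqrt D$ lies in $\mathbb{F}_q$ exactly when $\phi(D)=1$. I expect the main obstacle to be bookkeeping rather than ideas: pinning down the exact normalization of $\widetilde\psi$ and $\widetilde\chi$ from Section \ref{fibsumsec}, so that the resultant yields $\chi(D)$ with exactly the sign $\chi(-1)^{d(d-1)/2}$ and not its conjugate. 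I also need to handle the degenerate case where $\chi$ is trivial on some $1+J$ in Step 1 carefully.
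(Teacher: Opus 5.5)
Your route is the same as the paper's. In the squarefree case you use CRT together with the interpolation identity $a_{d-1}=\sum_{\beta}a(\beta)/m'(\beta)$ to split $S(m)$ into norm--trace Gauss sums over the fields $\mathbb{F}_{q^{k_i}}$, then rescale by $m'$, apply Hasse--Davenport, and use the Frobenius-sign argument for $(-1)^{d-r}=\phi(D)$. In the non-squarefree case you use a nilpotent ideal $J$ whose cosets kill the additive character. Your orbits $a(1+J)$ coincide with the paper's additive cosets $a+J$, since $a$ is a unit. Your nondegeneracy/$\mathrm{Ann}(J)$ argument is a tidy version of the paper's observation $\lambda(u^{d-1-s}h)=c_s$. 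Two steps, however, are not right as written.

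\emph{Step 1: there is no case distinction to make.} $\widetilde\chi$ is always trivial on $1+J$, because $\mathrm{res}(m,1+j)=\prod_{m(\beta)=0}(1+j(\beta))=1$. Every $j\in J$ vanishes at all roots of $m$; equivalently, multiplication by $1+j$ on $R_m$ is unipotent. This is exactly the fact the paper uses, $N(b+j)=N(b)N(1+b^{-1}j)=N(b)$. Your argument needs it, because it is what makes the orbit sum equal $\widetilde\chi(a)\widetilde\psi(a)\sum_{j\in J}\psi(\ell(aj))$. The fallback you sketch for a nontrivial restriction would additionally have to exclude $\widetilde\chi(1+j)\psi(\ell(aj))\equiv 1$ on $J$, which you do not address.

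\emph{Step 2: the prefactor has the wrong character.} The substitution $a=b\,m'$ gives $\chi(N(a))=\chi(N(b))\,\chi(N(m'))$ and $\ell(a)=\mathrm{Tr}(b)$. So the prefactor is $\chi\bigl(\mathrm{res}(m,m')\bigr)=\chi(-1)^{d(d-1)/2}\chi(D)$, not $\bar\chi(N(m'))$. The conventions you assumed are precisely the paper's: in the pure gauge case $\chi(\mathrm{res}(m,b))^{-1}=\chi(\mathrm{res}(m,a))$ since $ab=1$. So there is no leftover $\chi$ versus $\bar\chi$ ambiguity to appeal to. Moreover, a genuine change of convention would conjugate the Gauss sums as well, so it could not justify your mixed expression. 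With these two corrections the proof is complete.
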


The deformed Coulomb branch $\mathcal{M}^{\times}_{f,d}$ and the pure gauge Coulomb branch $\mathcal{M}^{\times}_{d}$ are isomorphic away from the resultant locus $\res(f,m)=0$ in $B$. This allows us to extend the preceding formula to the deformed case $f\neq 1$:

\begin{thm} \label{thm2}[Theorem \ref{thmfibdef} for trivial $\chi'$]
We have
$$
S(m)=
\chi(-1)^{\frac{d(d-1)}{2}}
\phi(D)\,
G_q(\chi,\psi)^d\,
\chi(D)\chi^{-1}(\res(m,f)),
$$
where $\res(m,f)$ denotes the resultant of $m$ and $f$, and we assume that $\chi(0)=0$.
\end{thm}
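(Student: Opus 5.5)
We reduce Theorem~\ref{thm2} to Theorem~\ref{thm1}, using the isomorphism between the deformed and the pure gauge Coulomb branches away from the resultant locus. Recall the $A_1$ presentation. A point of $\mathcal{M}^{\times}_{f,d}$ over $m\in B(\mathbb{F}_q)$ is given by a pair $(u^+,u^-)$ of polynomials of degree $<d$, viewed as elements of $R_m$, subject to
$$u^+u^-=f \mod m.$$
In the pure gauge case this equation reads $u^+u^-=1$, so the fiber is identified with $R_m^{\times}$ via $a=u^+$.

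\textbf{Step 1: the fiber over a point with $\res(m,f)\neq 0$.}
In this case $f$ is a unit in $R_m$. The map $(u^+,u^-)\mapsto (u^+,\,u^-f^{-1})$ is then a bijection from $\pi^{-1}(m)$ in $\mathcal{M}^{\times}_{f,d}$ to $\pi^{-1}(m)$ in $\mathcal{M}^{\times}_{d}$. We transport $W_m$ and $W_a$ through this bijection. The multiplicative part $W_m$ is the norm-type quantity $N_{R_m/\mathbb{F}_q}(u^+)$, up to normalization. In the deformed case it should acquire the factor $N(f)^{-1}$, because the natural coordinate becomes $u^+$ rescaled by $f$. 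The additive part $W_a$, a trace of $u^\pm$, is unchanged.

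We therefore first pin down, from the definitions in Section~\ref{fibsumsec}, the precise form of the lifted characters. They should be
$$\widetilde\chi(a)=\chi\bigl(N_{R_m/\mathbb{F}_q}(a)\bigr)\quad\text{(up to the discriminant twist)},\qquad \widetilde\psi(a)=\psi(\on{tr}\text{-type functional}).$$
With this in hand, the deformed fiber sum equals the pure fiber sum multiplied by $\chi^{-1}(N_{R_m/\mathbb{F}_q}(f))$. The key identity is
$$N_{R_m/\mathbb{F}_q}(f)=\prod_{m(\alpha)=0}f(\alpha)=\res(m,f),$$
which holds since $m$ is monic, with the sign convention matching $\res(m,f)$. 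Combining this with Theorem~\ref{thm1} gives the claimed formula.

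\textbf{Step 2: the degenerate case $\res(m,f)=0$.}
Here the right-hand side vanishes by the convention $\chi(0)=0$, so we must show $S(m)=0$. On this fiber $f$ is a zero divisor in $R_m$. Hence any solution of $u^+u^-=f$ has $u^+$ or $u^-$ a non-unit, and the factor $W_m$ should have zero norm along the relevant coordinate, making $\chi(W_m)=0$ pointwise. If instead $W_m$ involves only the unit part, we would factor $m=m_1m_2$ with $m_1$ coprime to $f$ and $m_2$ the part sharing roots with $f$, and use the CRT splitting $R_m\cong R_{m_1}\times R_{m_2}$. The sum over the $R_{m_2}$ component is then a character sum of a nontrivial multiplicative character that vanishes on non-units, so it is $0$.

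\textbf{Main obstacle.}
The bookkeeping in Step 1 is the crux: we must verify exactly how $W_m$ transforms under the rescaling by $f$, so as to get $\chi^{-1}$ of the resultant rather than $\chi$ of it, and with no extra sign such as $(-1)^{nd}$ from the ordering in $\res(m,f)$ versus $\res(f,m)$. We would check this first in the case $d=1$. There $R_m=\mathbb{F}_q$, $m=u-c$, and $f$ reduces to $f(c)=\res(m,f)$, so the identity can be read off directly.
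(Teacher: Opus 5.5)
Your Step 1 is the paper's argument in outline. On the fiber one has $b=a^{-1}f$, multiplicativity pulls out $\tilde\chi^{-1}(f)=\chi^{-1}(\res(m,f))$, and Theorem~\ref{thm1} does the rest. But the bookkeeping you flag as the crux is stated backwards. In the paper the multiplicative part is $\chi(\res(m,b))^{-1}$, a function of $b=u^-$, while the additive part $\psi(\lambda(a))$ is a function of $a=u^+$. That is exactly why $\chi^{-1}(\res(m,a^{-1}f))=\chi(\res(m,a))\,\chi^{-1}(\res(m,f))$ produces the extra factor while $\lambda(a)$ is untouched. If $W_m$ were $N(u^+)$, as you write, your bijection would fix it and no factor would appear. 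Your choice of $\res(m,f)$ over $\res(f,m)$ is the right one, since $\tilde\chi=\chi\circ\res(m,\cdot)$.

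The genuine gap is Step 2. The statement is Theorem~\ref{thmfibdef} with trivial $\chi'$, so the summand is $\chi'(\res(m,a))\,\chi^{-1}(\res(m,b))\,\psi(\lambda(a))$, where $\chi'$ is trivial on $\mathbb{F}_q^\times$ but $\chi'(0)=0$. This factor, which you never invoke, is what kills the points where $a$ is a non-unit. The paper uses it at the outset: only $a\in R_m^\times$ contribute, so $b=a^{-1}f$ and $S(m)=\tilde\chi^{-1}(f)\sum_{a\in R_m^\times}\tilde\chi(a)\tilde\psi(a)$ for every $m$. When $\res(m,f)=0$ the prefactor is $\chi^{-1}(0)=0$, so no case split is needed.

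Without that factor the vanishing is false. Take $d=2$, $m=(u-1)^2$, $f=u-1$, and write $a=\alpha+\beta(u-1)$, $b=\gamma+\delta(u-1)$. The fiber is $\alpha\gamma=0$, $\alpha\delta+\beta\gamma=1$, with $\lambda(a)=\beta$ and $\res(m,b)=\gamma^2$. The points with $\gamma\neq 0$ force $\alpha=0$ and $\beta=\gamma^{-1}$, with $\delta$ free; there $a$ is a non-unit and $b$ a unit. Together they contribute $q\,G_q(\chi^2,\psi)\neq 0$ for nontrivial $\psi$, whereas the right-hand side is $0$.

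So your pointwise argument (``$W_m$ has zero norm along the relevant coordinate'') is valid only once the $\chi'$-factor is included. Your CRT fallback fails. On the $R_{m_2}$-component the additive character $b\mapsto\psi(\lambda(fb^{-1}))$ need not drop out when $m_2$ is not square-free; the example above, with $m_2=m$, is a nonzero Gauss sum. The fallback also fails for trivial $\chi$, which the theorem allows. For instance, with $d=1$ and $f(c)=0$, the unmodified sum is $q-1$.
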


The appearance of the discriminant and resultant in these formulas provides a direct link between the geometry of the Coulomb branch and classical polynomial character sums.

\subsection{Selberg character sums}

A particularly interesting specialization occurs when the deformation polynomial is supported at two points. Assume that
$$
f(u)=u^\alpha(u-1)^\beta,
$$
and let $\tau$ be a generator of the group of multiplicative characters. Up to an irrelevant sign, Theorem \ref{thm2} gives the following expression for the sum over the fiber $\pi^{-1}(m)$:
$$
S(m)=
G_q(\tau^c,\psi)^d\,
\phi(D)\,
\tau(D^c\,m(0)^a m(1)^b).
$$
for some $a,b,c \in \{0,\dots, q-1\}$.
Summing over the base, the full exponential sum (\ref{expsum}) becomes
$$
S=
G_q(\tau^c,\psi)^d
\sum_{m}
\phi(D)\,
\tau(D^c\,m(0)^a m(1)^b),
$$
where the sum is over monic polynomials $m$ of degree $d$. The remaining sum is precisely the Selberg character sum studied by Evans \cite{Eva}. Thus the Selberg sum arises naturally as the base sum of the fiberwise Gauss sums associated with the Coulomb branch.

The main result of \cite{Eva} gives an explicit evaluation of the Selberg sum as a product of classical $\mathbb{F}_q$-Gauss sums. Combined with the fiberwise calculation above, it gives the following closed formula for the Coulomb-branch exponential sum.

\begin{thm}
The exponential sum over the Coulomb branch $\mathcal{M}^{\times}_{f,d}(\mathbb{F}_q)$ equals
$$
S=
\prod\limits_{j=0}^{d-1}
\dfrac{
G_q(\tau^{a+jc},\psi)
G_q(\tau^{b+jc},\psi)
G_q(\tau^{c+jc},\psi)
}{
G_q(\tau^{a+b+(d-1+j)c},\psi)
}.
$$
\end{thm}

This formula also makes the behavior under finite-field extensions particularly transparent. Let
$$
\tau^{(s)}=\tau\circ N:
\mathbb{F}_{q^s}^{\times}\longrightarrow\mathbb{C}^{\times},
\qquad
\psi^{(s)}=\psi\circ\textrm{tr}:
\mathbb{F}_{q^s}\longrightarrow\mathbb{C}^{\times}
$$
be the lifted characters, and let $S_{q^s}$ denote the corresponding exponential sum over $\mathcal{M}^{\times}_{f,d}(\mathbb{F}_{q^s})$. The Davenport--Hasse lifting formula
$$
G_{q^s}(\tau^{(s)},\psi^{(s)})
=
(-1)^{s-1}G_q(\tau,\psi)^s
$$
implies, after cancellation of the signs in the product above, that
$$
S_{q^s}=(S_q)^s.
$$
We view this identity as a Davenport-Hasse lifting theorem for Coulomb-branch character sums. Moreover, since a nontrivial Gauss sum satisfies
$
|G_q(\tau,\psi)|=\sqrt{q},
$
the product formula gives
$$
|S_q|=q^{d/2}.
$$
Together with the relation $S_{q^s}=(S_q)^s$, this suggests a simple cohomological interpretation. Namely, the Kummer-Artin-Schreier local system on $\mathcal{M}^{\times}_{f,d}$ associated with the characters $\chi$ and $\psi$ should behave, at the level of Frobenius traces, as a one-dimensional pure object of weight $d$. It would be interesting to understand this directly from the geometry of $\mathcal{M}^{\times}_{f,d}$.

\subsection{Counting rational points}

In addition to the twisted point counts above, we study the ordinary point counts of the same Coulomb branches. Let
$
|\mathcal{M}^{\times}_{f,d}(\mathbb{F}_q)|
$
denote the number of $\mathbb{F}_q$-points of the deformed Coulomb branch and define the generating function
$$
\mathcal{Z}_f(z)
=
\sum\limits_{d=0}^{\infty}
|\mathcal{M}^{\times}_{f,d}(\mathbb{F}_q)|z^d.
$$
In Section \ref{pointcountse} we obtain the following explicit formula.

\begin{thm}
We have
$$
\mathcal{Z}_f(z)=
\dfrac{(1-zq)^2}{(1-z)(1-zq^2)}
\prod\limits_{p\mid f,\atop p\neq x}
\dfrac{
1-(qz)^{(\nu_p(f)+1)\deg(p)}
}{
1-(qz)^{\deg(p)}
},
$$
where the product runs over irreducible polynomials $p\in\mathbb{F}_q[x]$, $p\neq x$, dividing $f$, and $\nu_p(f)$ denotes the corresponding valuation.
\end{thm}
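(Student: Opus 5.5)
The plan is to compute $|\mathcal{M}^{\times}_{f,d}(\mathbb{F}_q)|$ fiberwise over the base $B$ and then assemble the generating function as an Euler product over irreducible polynomials. For the fibers I will use the standard presentation of the $A_1$ Coulomb branch. Over a point $m$ of the base (monic, $\deg m=d$, $m(0)\neq 0$), the fiber $\pi^{-1}(m)$ is the set of pairs $(a,b)\in R_m^2$ with $ab=f$ in $R_m=\mathbb{F}_q[x]/(m)$. In the pure gauge case $f=1$ this reduces to the units $R_m^\times$, as stated in the introduction. So
$$|\mathcal{M}^{\times}_{f,d}(\mathbb{F}_q)|=\sum_{m} N_f(m),\qquad N_f(m)=\#\{(a,b)\in R_m^2:\ ab=f\}.$$

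\textbf{Multiplicativity.} Write $m=\prod p^{k_p}$ with $p\neq x$, since $m(0)\neq 0$. The Chinese remainder theorem gives $R_m\cong\prod_p \mathbb{F}_q[x]/(p^{k_p})$, and the equation $ab=f$ splits componentwise. Hence $N_f(m)=\prod_p N_{f,p}(k_p)$ is multiplicative in $m$. Since monic polynomials with nonzero constant term factor uniquely into powers of irreducibles $p\neq x$, we get
$$\mathcal{Z}_f(z)=\prod_{p\neq x} L_p(w),\qquad L_p(w)=\sum_{k\ge0}N_{f,p}(k)\,w^k,\quad w=z^{\deg p}.$$

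\textbf{Local counts.} Put $Q=q^{\deg p}$ and $v=\nu_p(f)$, and work in the local ring $A=\mathbb{F}_q[x]/(p^k)$.

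1. If $p\nmid f$, then $a$ must be a unit and $b$ is determined by $a$. So $N=Q^k(1-Q^{-1})$ for $k\ge1$, and $L_p=(1-w)/(1-Qw)$.

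2. If $p\mid f$ and $k>v$, write $a=p^i u$ with $u$ a unit. A solution forces $i\le v$, and then $b$ is determined modulo $p^{k-i}$, giving $Q^i$ lifts. The number of such $a$ is $Q^{k-i}(1-Q^{-1})$. Each $i\in\{0,\dots,v\}$ therefore contributes $Q^k(1-Q^{-1})$, for a total of $(v+1)Q^k(1-Q^{-1})$.

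3. If $k\le v$, the equation becomes $ab=0$ in $A$. I count this by summing over $i=\operatorname{val}(a)$, with $b$ ranging over the ideal $(p^{k-i})$. This gives $Q^k\bigl(1+k(1-Q^{-1})\bigr)$.

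I then sum the resulting series in $w$ and verify the local identity
$$L_p(w)=\frac{1-w}{1-Qw}\cdot\frac{1-(Qw)^{v+1}}{1-Qw}.$$
This is the step I expect to be the main obstacle: it is a finite but fiddly geometric-series manipulation gluing the regimes $k\le v$ and $k>v$. I would check it first for $v=0$ and $v=1$, then do the general case by writing $L_p-\frac{1-w}{1-Qw}$ as a polynomial correction in $Qw$.

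\textbf{Global assembly.} It remains to evaluate the product of the $p\nmid f$ factors over all $p\neq x$. I use the identity $\prod_{\text{all }p}(1-t^{\deg p})^{-1}=\frac{1}{1-qt}$, which is the count of monic polynomials, at $t=z$ and at $t=qz$, and remove the factor for $p=x$ in each. This gives
$$\prod_{p\neq x}\frac{1-z^{\deg p}}{1-(qz)^{\deg p}}=\frac{1-qz}{1-z}\cdot\frac{1-qz}{1-q^2z}=\frac{(1-qz)^2}{(1-z)(1-zq^2)}.$$
The extra factors $\frac{1-(qz)^{(\nu_p(f)+1)\deg p}}{1-(qz)^{\deg p}}$ for $p\mid f$, $p\neq x$, supply exactly the product in the statement. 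Finally, the prime $p=x$ never appears, because $m(0)\neq0$; this is why $p=x$ is excluded from the product even when $x\mid f$.
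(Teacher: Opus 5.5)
Your proposal is correct and follows the paper's proof essentially step for step. Both arguments use the fiber description $ab=f$ in $R_m$, CRT multiplicativity giving an Euler product over $p\neq x$, and the same local counts $N_f(p^k)$; your three cases reproduce the paper's formula exactly. Both then identify the local ratio $L_{f,p}/L_{1,p}=\frac{1-(Qw)^{v+1}}{1-Qw}$ and assemble globally via $\prod_p(1-t^{\deg p})^{-1}=(1-qt)^{-1}$ at $t=z$ and $t=qz$.

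The step you flagged as the main obstacle is short. Put $t=Qw$; your counts give $L_p=\frac{1-t^{v+1}}{1-t}+(1-Q^{-1})\sum_{k\geq 1}\min(k,v+1)\,t^k$. Since $\sum_{k\geq 1}\min(k,v+1)\,t^k=\frac{t(1-t^{v+1})}{(1-t)^2}$ and $1-w=(1-t)+(1-Q^{-1})t$, the local identity follows at once.
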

In the pure gauge case $f=1$, this reduces to
$$
\mathcal{Z}(z)
=
\dfrac{(1-zq)^2}{(1-z)(1-zq^2)}.
$$
Thus both the twisted and untwisted point counts of these $A_1$ Coulomb branches admit simple explicit descriptions: the former in terms of classical Gauss  and the latter through a simple rational generating function.

\section*{Acknowledgments} 
We thank Kifung Chan, Yixuan Li, Spencer Tamagni, Peng Zhou for valuable discussion about Coulomb branches and monopole scattering matrices. 
Work of A. Smirnov is partially supported by NSF grant
DMS-2401380 and by the Simons Foundation grant “Travel Support for Mathematicians”.


\section{Coulomb branches} 
\subsection{Coulomb branches}
Let $G$ be a complex reductive group and $\mathcal{G}r_G:=G(\mathcal{K})/G(\mathcal{O})$ the associated affine Grassmannian, where $\mathcal{K}=\mathbb{C}((z))$ is the field of formal Laurent series and $\mathcal{O}=\mathbb{C}[[z]]$ is the ring of formal power series.
It was shown in \cite{BFM} that $\mathrm{K}^{G(\mathcal{O})}(\mathcal{G}r_G)$ carries a convolution product that makes it into a commutative algebra. The corresponding affine scheme
$\textrm{Spec}(\mathrm{K}^{G(\mathcal{O})}(\mathcal{G}r_G))$
is called the pure gauge multiplicative Coulomb branch.


A more general construction, the so-called BFN Coulomb branch, was introduced in \cite{BFN}. The construction takes as input a complex reductive group $G$ and a representation $N$ of $G$. One defines
\[
\mathcal{T}_{G,N}:=G(\mathcal{K})\times_{G(\mathcal{O})} N(\mathcal{O}),
\]
and its subspace $ \mathcal{R}_{G,N}\subset \mathcal{T}_{G,N}$:
\[
\mathcal{R}_{G,N}:=\{(g,s)\in \mathcal{T}_{G,N}|\quad gs\in N(\mathcal{O})\}.
\] 
The equivariant $K$-theory $\mathrm{K}^{G(\mathcal{O})}(\mathcal{R}_{G,N})$ carries a convolution product, making it into a commutative algebra. The BFN multiplicative Coulomb branch associated with the pair $(G,N)$ is then defined as
$
\mathrm{Spec}(\mathrm{K}^{G(\mathcal{O})}(\mathcal{R}_{G,N})).
$


If there exists a torus $T$ acting on $N$ such that the $T$ action and the $G$ action commute, then the Coulomb branch admits a deformation defined as $\mathrm{Spec}(\mathrm{K}^{G(\mathcal{O})\times T}(\mathcal{R}_{G,N}))$. By turning off all equivariant parameters of $T$, we recover the ordinary Coulomb branch as the central fiber of the deformation family.

Coulomb branches come with a natural flat morphism
\bean \label{propi}
\pi: \mathrm{Spec}(\mathrm{K}^{G(\mathcal{O})}(\mathcal{R}_{G,N}))\to B:= \mathrm{Spec}(\mathrm{K}^{G(\mathcal{O})}(pt)),
\eean 
induced by the natural homomorphism
$
\mathrm{K}^{G(\mathcal{O})}(pt)\to \mathrm{K}^{G(\mathcal{O})}(\mathcal{R}_{G,N}).
$

\subsection{Monopole operators}
$\mathcal{G}r_G$ has a natural $G_{\mathcal{O}}$ action from the left, which induces the following Bruhat-type decomposition
\[
\mathcal{G}r_G=\coprod\limits_{\lambda\in \chi(T)^{+}}\, G_{\mathcal{O}}t^{\lambda},
\]
where $\chi(T)^{+}$ is the cone of dominant coweights. Every open cell $G_{\mathcal{O}}t^{\lambda}$ is a smooth variety, and $\overline{G_{\mathcal{O}}t^{\lambda}}=\sqcup_{\mu\leq \lambda} G_{\mathcal{O}}t^{\mu}$ defines a closed subvariety of $\mathcal{G}r_G$ for all $\lambda\in \chi(T)^+$. In particular, if $\lambda$ is minuscule, then $G_{\mathcal{O}}t^{\lambda}$ is smooth.

Let $p: \mathcal{R}_{G,N}\to \mathcal{G}r_G=G_{\mathcal{K}}/G_{\mathcal{O}}$ be the natural projection defined by $p:(g,s)\to g$. 
\begin{defn}
For minuscule $\lambda$, we call the class $[\mathcal{O}_{p^{-1}(G_{\mathcal{O}}t^{\lambda})}] \in \mathrm{K}^{G(\mathcal{O})}(\mathcal{R}_{G,N}) $ the monopole operator.
\end{defn}
More generally,
\begin{defn}
For any $G_{\mathcal{O}}$-equivariant coherent sheaf $\mathcal{F}$ over $G_{\mathcal{O}}t^{\lambda}$, we call the class $[p^{*}\mathcal{F}]\in \mathrm{K}^{G(\mathcal{O})}(\mathcal{R}_{G,N})$ a dressed monopole operator.
\end{defn}
Since $\mathcal{R}_{G,N}|_{G_{\mathcal{O}}t^{\lambda}}\to G_{\mathcal{O}}t^{\lambda}$ is a vector bundle (even for non-minuscule $\lambda$), there is no need to distinguish between the ordinary and derived pullbacks. 
\begin{lem}[\cite{Web}, Proposition 3.1;\cite{VV}, Remark 4.2.6]
The Coulomb branch algebra $\mathrm{K}^{G(\mathcal{O})}(\mathcal{R}_{G,N})$ is generated by dressed monopole operators and $K^{G_{\mathcal{O}}}(pt)$.
\end{lem}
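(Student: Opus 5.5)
The statement is a generation result, quoted from \cite{Web} and \cite{VV}; I would reprove it by a filtration argument plus an abelian reduction. The skeleton is the standard one for BFN-type algebras.

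\emph{Filtration and associated graded.} Filter $\mathrm{K}^{G(\mathcal{O})}(\mathcal{R}_{G,N})$ by support: for a dominant coweight $\lambda$, let $F_{\leq\lambda}$ be the classes supported on $p^{-1}(\overline{G_{\mathcal{O}}t^{\lambda}})$. The filtration is exhaustive, and convolution satisfies $F_{\leq\lambda}\ast F_{\leq\mu}\subset F_{\leq\lambda+\mu}$. By dévissage along the stratification into orbits, the associated graded piece $\gr_\lambda$ is $\mathrm{K}^{G(\mathcal{O})}(p^{-1}(G_{\mathcal{O}}t^{\lambda}))$.

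\emph{Identifying each graded piece.} Since $p^{-1}(G_{\mathcal{O}}t^{\lambda})\to G_{\mathcal{O}}t^{\lambda}$ is a vector bundle, the Thom isomorphism gives
\[
\gr_\lambda\cong \mathrm{K}^{G(\mathcal{O})}(G_{\mathcal{O}}t^{\lambda})\cong \mathrm{K}^{\mathrm{Stab}_\lambda}(pt)\cong \mathrm{K}^{G_\lambda}(pt)=R(G_\lambda),
\]
where $G_\lambda$ is the Levi subgroup centralizing $\lambda$. The middle step uses that the stabilizer is pro-unipotent over $G_\lambda$. Every element of $R(G_\lambda)$ comes from a $G_{\mathcal{O}}$-equivariant vector bundle on the orbit. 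Hence each $\gr_\lambda$ is spanned by images of dressed monopole operators, and, by induction on $\lambda$ in the dominance order, the dressed monopole operators already span the whole algebra as a vector space.

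\emph{Reducing to generators.} A spanning set is not yet the claimed generation: I need that dressed monopoles attached to finitely many $\lambda$, together with $\mathrm{K}^{G_{\mathcal{O}}}(pt)$, generate. Two approaches are available.
\begin{enumerate}
\item Pass to the abelianized algebra via localization to the torus $T$. There the product of the classes for $\lambda$ and $\mu$ has leading term equal to the class for $\lambda+\mu$, up to an Euler class factor coming from $N$.
\item Show directly that $\gr$ is generated in degrees given by a set of generators of the semigroup of dominant coweights, allowing dressings.
\end{enumerate}
In either case, the leading term of the product of dressed monopoles for $\lambda$ and $\mu$ in $\gr_{\lambda+\mu}$ is the product of the dressings times an Euler class. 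Through the inclusion $R(G_{\lambda+\mu})\supset R(G_\lambda)\otimes\ldots$, this gives generation modulo lower filtration, and induction finishes.

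\emph{Main obstacle.} The hard step is the leading-term computation, because the Euler class factor is non-invertible. A single generator set therefore cannot be chosen naively. I would follow \cite{Web}: allow arbitrary dressings, so that one only needs $\gr$ to be generated as a $\mathrm{K}^{G_{\mathcal{O}}}(pt)$-algebra by the spaces $\gr_\lambda$. This holds trivially, since every $\lambda$ itself indexes a dressed monopole operator. Under this reading the lemma follows directly from the spanning statement above, and the Euler-class issue only matters for finite generation, which the statement does not require.
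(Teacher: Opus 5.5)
You have correctly proved the weak form of the statement, but that form is not what the paper uses, so relative to the paper there is a genuine gap. Your support filtration has $\mathrm{gr}_\lambda$ spanned by dressed classes, so dressed classes over \emph{all} dominant $\lambda$ span $\mathrm{K}^{G(\mathcal{O})}(\mathcal{R}_{G,N})$. Two small repairs are needed:
\begin{enumerate}
\item For non-minuscule $\lambda$ the orbit is not closed, so $[p^*\mathcal{F}]$ only exists after choosing an extension to $p^{-1}(\overline{G_{\mathcal{O}}t^{\lambda}})$, and it is then well defined only modulo $F_{<\lambda}$.
\item The $K$-theory localization sequence is only right exact, so d\'evissage gives a surjection onto $\mathrm{gr}_\lambda$, not an isomorphism. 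A surjection is all you use.
\end{enumerate}
The problem is your last paragraph. Letting every $\lambda$ supply its own generator turns the lemma into a tautology, and the paper needs more. In the proof of Theorem~\ref{genthm}, only dressed \emph{minuscule} operators, of weights $(1,\dots,1,0,\dots,0)$ and $(0,\dots,0,-1,\dots,-1)$, are shown to lie in the subalgebra generated by $z_p$, $w_p$ and $\mathrm{K}^{GL(d,\mathcal{O})}(pt)$. For that to prove the theorem, the lemma must say that dressed operators attached to this restricted family of coweights, together with $\mathrm{K}^{G_{\mathcal{O}}}(pt)$, already generate. That is exactly the step you flag as the main obstacle and then set aside. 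Your remark that the Euler class ``only matters for finite generation'' misplaces the difficulty: the Euler class is what decides whether a proper subfamily of coweights suffices.

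The missing ingredient is a chamber refinement of your leading-term computation. For dominant $\lambda,\mu$ the roots contribute no correction. The $N$-contribution to the coefficient at $\lambda+\mu$, already visible in the abelianized product formula, is a product over the weights $\xi$ of $N$ for which $\langle\xi,\lambda\rangle$ and $\langle\xi,\mu\rangle$ have strictly opposite signs. So if $\langle\xi,\lambda\rangle\langle\xi,\mu\rangle\geq 0$ for all weights $\xi$ of $N$, then the product of the operators for $\lambda,\mu$ with dressings $f,g$ equals the operator for $\lambda+\mu$ with dressing $fg$, modulo $F_{<\lambda+\mu}$.

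For $N=\mathrm{Hom}(\mathbb{C}^d,\mathbb{C}^n)$ the weights are $-\epsilon_i$. Every dominant $\lambda$ is a sum of minuscule coweights $(1^k,0^{d-k})$ and $(0^{d-k},(-1)^k)$ that pairwise satisfy this condition; decompose the positive and negative parts of $\lambda$ separately.

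You also need that products of the dressings at these summands span $R(L_\lambda)$, where $L_\lambda$ is the intersection of their Levis. This holds in type $A$:
\begin{enumerate}
\item The elementary symmetric functions of a block $\{k+1,\dots,k'\}$ are polynomials in those of $\{1,\dots,k\}$ and $\{1,\dots,k'\}$, and similarly for final segments.
\item Block determinants and their inverses are ratios of determinants of initial or final segments.
\end{enumerate}
With these two facts your induction on the dominance order closes and gives the minuscule form of the lemma that Theorem~\ref{genthm} relies on.
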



\subsection{Coulomb branch of type $A_1$}
The Coulomb branch we consider in this note is the Coulomb branch associated with a representation of the $A_1$ quiver, i.e., we take $G=GL(d, \mathbb{C})$ and $N=\mathrm{Hom}(\mathbb{C}^d, \mathbb{C}^n)$, where we understand $\mathbb{C}^d$ as the standard fundamental representation of $G$. In this case, one has the flavor symmetry $T=\prod_{i=1}^n\mathbb{C}_{a_i}^*$ acting on the framing $\mathbb{C}^n$, providing a deformation family of the Coulomb branch associated with a representation of the $A_1$ quiver.

For $G=GL(d)$, $\lambda=(1,...,1,0,...,0)$ or $(0,...,0,-1,...,-1)$ are all minuscule. 
\begin{ex}
The minuscule $G(\mathcal{O})$ orbit $G(\mathcal{O})t^{(0,...,0,-1)}$ is isomorphic to the Grassmannian~$Gr(1,d)$, and the minuscule $G(\mathcal{O})$ orbit $G(\mathcal{O})t^{(1,0,...,0)}$ is isomorphic to $Gr(d-1,d)$.
\end{ex}

Denote the tautological quotient bundle of rank one over the minuscule $G(\mathcal{O})$ orbit $G(\mathcal{O})t^{(1,0,...,0)}\cong Gr(d-1,d)$ by $Q$, and the tautological bundle of rank one over the minuscule $G(\mathcal{O})$ orbit $G(\mathcal{O})t^{(0,...,0,0,-1)}\cong Gr(1,d)$ by $S$. The base of the projection in this case is
 $$
B=Spec(K^{GL(d,\mathcal{O})}(pt)) \cong  Spec(\mathbb{C}[y_1^{\pm 1},y_2^{\pm 1},\dots, y_d^{\pm 1}]^{sym})
 $$
 where $sym$ denotes symmetric polynomials. 
 It will be convenient to identify $B$ with the space of monic polynomials $m(u)$ of degree $\deg m(u)=d$ with non-vanishing constant term (so that $y_i$ are the roots of $m(u)$).

 In the rest of this section, we are going to prove the following.
\begin{thm} \label{genthm}
The monopole operators $[(p^*Q)^{\otimes p}]$, $[(p^*S)^{\otimes p}]$ for $p\in \{0,1,...,d-1\}$ together with $K^{GL(d,\mathcal{O})}(pt)$ generate the $A_1$ Coulomb branch algebra.
\end{thm}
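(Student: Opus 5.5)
The plan is to start from the cited Lemma (\cite{Web}, \cite{VV}): the algebra $\mathrm{K}^{G(\mathcal{O})}(\mathcal{R}_{G,N})$ is generated by $K^{G_{\mathcal{O}}}(pt)$ and dressed monopole operators $[p^*\mathcal{F}]$ for all dominant $\lambda$ and all $G_{\mathcal{O}}$-equivariant sheaves $\mathcal{F}$ on $G_{\mathcal{O}}t^{\lambda}$. The argument then proceeds in two reductions. First, we reduce from arbitrary dominant coweights $\lambda$ to the two minuscule coweights $(1,0,\dots,0)$ and $(0,\dots,0,-1)$. Second, for these two orbits we reduce arbitrary dressings to the powers $Q^{\otimes p}$ and $S^{\otimes p}$ with $0\le p\le d-1$.

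\textbf{Reduction to minuscule coweights.} We pass to the localization to the torus fixed points, i.e.\ the abelianized Coulomb branch. There the BFN embedding into a ring of difference operators in $y_i^{\pm1}$ and shift operators $\mathsf{u}_i^{\pm1}$ (with rational coefficients in $y$) is available. In this picture:
\begin{itemize}
\item the dressed minuscule monopoles for $(1,0,\dots,0)$ are symmetrizations $\sum_i g(y_i)\prod_{j\ne i}\frac{(\cdots)}{1-y_j/y_i}\,\mathsf{u}_i$, with $g$ ranging over $K_{GL_1\times GL_{d-1}}(pt)$;
\item those for $(0,\dots,0,-1)$ are the analogous expressions with $\mathsf{u}_i^{-1}$.
\end{itemize}
Next, we filter the Coulomb branch by the dominance order on $\lambda$, following the standard argument of BFN/\cite{Web}. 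The associated graded product of a monopole for $\lambda$ with one for $\mu$ equals, to leading order, a dressed monopole for $\lambda+\mu$ times an invertible-on-the-open-cell factor (the Euler class of the $N$-part). One then argues that the elements $\mathsf{u}_i^{\pm1}$ weighted by symmetric dressings generate. Concretely, every dominant $\lambda$ for $GL(d)$ is a sum of minuscule coweights $\omega_k=(1^k,0^{d-k})$ and $-\omega_k$. Moreover, $\omega_k$-monopoles are obtained from products of $\omega_1$-monopoles modulo lower terms, using the fact that dressings by $K_{GL_k\times GL_{d-k}}(pt)$ are generated by line-bundle-type dressings. By induction on the dominance order, all $\lambda$ are then reached. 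This step involves a careful check that the coefficient factors coming from $N=\mathrm{Hom}(\mathbb{C}^d,\mathbb{C}^n)$, namely products $\prod(1-a_k/y_i)$, do not obstruct the leading-term argument. I expect this to be the main obstacle: with matter, the product of two minuscule monopoles yields the $\lambda+\mu$ monopole multiplied by a nonunit polynomial factor. To handle it, I would first invoke the fact (\cite{BFN}, flatness and the injectivity of localization) that $p^{-1}$ of each orbit is a vector bundle. The leading term is then the class of $p^*$ of (dressing $\otimes$ a determinant line of the $N$-correction), and this determinant line is itself a dressing absorbed into $\mathcal{F}$.

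\textbf{Dressings on the minuscule orbits.} On $G_{\mathcal{O}}t^{(1,0,\dots,0)}\cong Gr(d-1,d)\cong\mathbb{P}^{d-1}$, the equivariant category reduces, via $G_{\mathcal{O}}\to G$ with a prounipotent kernel acting trivially on K-theory, to $K^{GL_d}(\mathbb{P}^{d-1})$. This ring is a free $K^{GL_d}(pt)$-module with basis $[Q^{\otimes p}]$, $0\le p\le d-1$, by the projective bundle theorem, with the relation $\prod_i(Q-y_i)=0$. The map $\mathcal{F}\mapsto[p^*\mathcal{F}]$ is $K^{GL_d}(pt)$-linear, where $K^{G_{\mathcal{O}}}(pt)$ acts through the convolution product on the left. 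Hence every dressed monopole on this orbit is a $K^{GL(d,\mathcal{O})}(pt)$-combination of the $[(p^*Q)^{\otimes p}]$. The same argument with $S$ treats $(0,\dots,0,-1)$. Combining this with the first reduction gives the theorem.
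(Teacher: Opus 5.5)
\textbf{There is a genuine gap.} It sits in the step that carries the actual content of the theorem: obtaining the dressed monopoles for $\omega_k=(1^k,0^{d-k})$ with $k\ge 2$ from the $\omega_1$-ones, and dually on the negative side. Your Step~3 is fine; it cleanly shows that every dressing on the $\omega_1$-orbit $\cong Gr(d-1,d)$ is a $K^{GL(d,\mathcal{O})}(pt)$-combination of the $Q^{\otimes p}$, $0\le p\le d-1$. The reduction from arbitrary dominant $\lambda$ to the minuscule $\pm\omega_k$ is the standard input, which the paper takes from the cited lemma.

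\textbf{Why the leading-term step fails.} You claim that $\omega_k$-monopoles are ``obtained from products of $\omega_1$-monopoles modulo lower terms'' in the dominance filtration. This cannot work. A product of $k$ dressed $\omega_1$-monopoles lies in the filtration piece indexed by $k\omega_1=(k,0,\dots,0)$, and its symbol sits in that component of the associated graded. But $\omega_k\prec k\omega_1$ for $k\ge2$. More generally, the symbols of your generators together with $K(pt)$ generate only the graded components at coweights $(k,0,\dots,0,-l)$. So the associated graded is not generated by them, and no leading-term induction ever reaches the $\omega_k$ stratum: the $\omega_k$-classes are exactly among the ``lower terms'' you discard.

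\textbf{What actually happens.} These classes do lie in the subalgebra, but only through exact cancellation of the top, diagonal ($k\omega_1$-type) components in specific $K(pt)$-combinations of products. For $d=2$, put $z_p=\sum_i y_i^p x_i/\prod_{j\ne i}(1-y_j/y_i)$ in the abelianized ring. Then
\[
e_2\,x_1x_2 \;=\; z_1^2-e_1z_0z_1+e_2z_0^2 ,
\]
and the $x_1^2$ and $x_2^2$ contributions cancel identically. Proving identities of this kind in general is what your proposal lacks.

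\textbf{How the paper supplies the missing step.} It computes explicitly in the localized abelianized ring, in three moves.
\begin{enumerate}
\item It inverts the Vandermonde-type system expressing $d$ consecutive $z_p$ in terms of the $x_i$. Because of the normalization $\prod_{j\ne i}(1-y_j/y_i)$, each $x_i$ becomes a combination of the $z_p$ with coefficients in $\mathbb{Z}[y_i^{\pm1},e_1(y),\dots,e_d(y)]$. For $d=2$, for instance, $x_1=y_1^{-1}(z_1-y_2z_0)$.
\item Substituting this into the localization formula of an arbitrary dressed $\omega_k$-monopole gives a polynomial in the $z_p$ whose coefficients are symmetric rational functions of $y$.
\item Their possible simple poles along $y_a=y_b$ cancel between paired subsets $A\ni a$, $B\ni b$ with $A\setminus\{a\}=B\setminus\{b\}$. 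Hence the coefficients lie in $K^{GL(d,\mathcal{O})}(pt)$.
\end{enumerate}
The negative side is treated the same way using $s_i=f(y_i)x_i^{-1}$.

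\textbf{A secondary error.} The matter correction in the BFN leading-term formula is a $K$-theoretic Euler class $\prod(1-\xi^{\pm1})$, not a determinant line. It is a non-unit, so it cannot be ``absorbed into $\mathcal{F}$''. The reduction to $\pm\omega_k$ works instead because one can decompose a dominant $\lambda$ into pieces whose positive and negative parts have disjoint supports. Then no weight of $N$ pairs with opposite signs, and this factor is $1$.
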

\begin{proof}
Equivalently, we can prove the classes $[(p^*(\mathcal{O}^{\oplus d}-Q))^{\otimes p}]$ and $[(p^*S)^{\otimes p}]$ are set of generators.

First, one can define an embedding:
\[(p^*)^{-1}i_*:K^{GL(d,\mathcal{O})}(R_{GL(d,\mathbb{C}),N})\otimes_{\mathbb{Z}}\mathbb{C}\to K^{GL(d,\mathcal{O})}(\mathcal{G}r_{GL(d,\mathbb{C})})\otimes_{\mathbb{Z}}\mathbb{C}.
\] 
Here \[i: R_{G,N}\to T_{G,N}\] the closed embedding and \[p: T_{G,N}\to Gr_G\] a vector bundle.

Now we work on the locus that $\prod_{i\neq j}(1-\frac{y_i}{y_j})\neq 0$. After localizing $\prod_{i\neq j}(1-\frac{y_i}{y_j})$ we have:
\[
(K^{GL(d,\mathcal{O})}(\mathcal{G}r_{GL(d,\mathbb{C})})\otimes_{\mathbb{Z}}\mathbb{C})_{loc}\cong (K^{\mathbb{C}^{\times}(\mathcal{O})^d}(\mathcal{G}r_{\mathbb{C}^{\times})^d}) \otimes_{\mathbb{Z}}\mathbb{C})_{loc}^{S_d}.
\]
Combine these two maps together; we have the following embedding:
\[
 (K^{GL(d,\mathcal{O})}(R_{GL(d,\mathbb{C}),N})\otimes_{\mathbb{Z}}\mathbb{C})_{loc}\to (K^{\mathbb{C}^{\times}(\mathcal{O})^d}(\mathcal{G}r_{\mathbb{C}^{\times})^d}) \otimes_{\mathbb{Z}}\mathbb{C})_{loc}^{S_d}.
\]
Under such embedding, the image of $z_p$ and $w_p$ can be written as 
\[
z_{p}:=[(p^*(\mathcal{O}^{\oplus d}-Q))^{\otimes p}]=\sum_i \frac{y^{p}_ix_i}{\prod_{j\neq i}(1-\frac{y_j}{y_i})}, \ \ \ 
w_p:= [S^{\otimes p}]=\sum_i \frac{y^p_i\prod_{k=1}^{n}(1-\frac{a_k}{y_i})x_i^{-1}}{\prod_{j\neq i}(1-\frac{y_i}{y_j})}.
\]
A detailed computation can be found in section 2 of \cite{SS}. Where they turn on the loop rotation but we do not.

The localization formula can be rewritten in terms of matrix form.
\[
\langle z_{-d+1},...,z_0\rangle^{T}=V\langle x_1,...,x_d \rangle^{T}
\]

Then we can solve out $x_i$ as
\[
x_i=V^{-1}_i\langle z_{-d+1},...,z_{0}\rangle^T
\]
Here the inverse matrix $V^{-1}$ is
\[
V^{-1}=\begin{bmatrix}
(-1)^{d-1}\prod_{j\neq 1}y_j & (-1)^{d-
2}\sum_i \prod_{j\neq i,1} y_j & ... & 1 \\
(-1)^{d-1}\prod_{j\neq 2}y_j& (-1)^{d-
2}\sum_i \prod_{j\neq i,2} y_j & ... & 1 \\
... & ... & ...&... \\
(-1)^{d-1}\prod_{j\neq d}y_j & (-1)^{d-
2}\sum_i \prod_{j\neq i,d} y_j & ... & 1\\
\end{bmatrix}
\]
Which can be written as
\[
V^{-1}=\begin{bmatrix}
(-1)^{d-1}\sum_{k=1}(-y_1)^{k-1}e_{d-k}(y) & (-1)^{d-
2}\sum_{k=2}(-y_1)^{k-2}e_{d-k}(y) & ... & 1 \\
(-1)^{d-1}\sum_{k=1}(-y_2)^{k-1}e_{d-k}(y) & (-1)^{d-
2}\sum_{k=2}(-y_2)^{k-2}e_{d-k}(y) & ... & 1\\
... & ... & ...&... \\
((-1)^{d-1}\sum_{k=1}(-y_d)^{k-1}e_{d-k}(y) & (-1)^{d-
2}\sum_{k=2}(-y_d)^{k-2}e_{d-k}(y) & ... & 1\\
\end{bmatrix}
\]
Under such rewriting we have,
\[
x_i=\sum_{k\geq 1}^{d}(\sum_{n\geq 0}^{k-1}(-1)^ne_n(y)z_{-d+k-n})y_j^{d-k}.
\]
Given an dressed positive minuscule monopole operator $R$ with weight $(1,...,1,0,...,0)$ and  $|(1,...,1,0,...,0)|=k$, its image under localization is
\[
R=\sum_{w\in S_d/(S_k\times S_{d-k})} \frac{f(w\cdot y_i)\prod_{i\in wI} y_i^{d-k}x_i}{\prod_{i\in wI,j\notin wI} (y_i-y_j)}
\]
Here $I=\{1,...,k\}$, $f(y_i)$ is a symmetric polynomial in variables $y_i, i\in I$,

$x_j=\sum_{k\geq 1}^{d}(\sum_{n\geq 0}^{k-1}(-1)^ne_n(y)z_{-d+k-n})y_j^{d-k}$.

Such a fractional expression may have simple poles at $y_a=y_b$ for $a\neq b$. Now we are going to prove it indeed has no pole, i.e. the residue of this fractional expression is zero at $y_a=y_b$ for any $a\neq b$.

Denote $S=\{1,...,d\}$ the total index set. Fix $a\neq b$ for some $a, b\in S$, then any size $k$ subset $a\in A\subset S$ naturally pairs up with another size $k$ subset $b\in B\subset S$, such that 
\[
B\backslash\{b\}=A\backslash\{a\}.
\]Since 
\[
\prod_{i\in A,j\notin A} (y_i-y_j)=(y_a-y_b)\prod_{j\in S\backslash(A\cup B)} (y_a-y_j)\prod_{i\in A\backslash\{a\}} (y_i-y_b)\prod_{i\in A\backslash\{a\}, j\in S\backslash(A\cup B)} (y_i-y_j).
\]
And
\[
\prod_{i\in B,j\notin B} (y_i-y_j)=(y_b-y_a)\prod_{j\in S\backslash(A\cup B)} (y_b-y_j)\prod_{i\in B\backslash\{b\}} (y_i-y_a)\prod_{i\in B\backslash\{b\}, j\in S\backslash(A\cup B)} (y_i-y_j).
\] 
One can compute the Residue
\[
\lim_{y_a-y_b\to 0}(y_a-y_b)(\frac{f(y_i)\prod_{i\in A} y_i^{d-k}x_i}{\prod_{i\in A,j\notin A} (y_i-y_j)}+
\frac{f(y_i)\prod_{i\in B} y_i^{d-k}x_i}{\prod_{i\in B,j\notin B} (y_i-y_j)})
\]
\[
=\lim_{y_a-y_b\to 0}((y_a-y_b)\frac{f(y_i)\prod_{i\in A} y_i^{d-k}x_i}{\prod_{i\in A,j\notin A} (y_i-y_j)}-(y_b-y_a)
\frac{f(y_i)\prod_{i\in B} y_i^{d-k}x_i}{\prod_{i\in B,j\notin B} (y_i-y_j)})=0
\]

By running through all possible $A$ we conclude that this formula has no pole at $y_a-y_b$.

$X$ is a symmetric rational function in $y_i$, and the residue calculation above implies that this is indeed a symmetric polynomial in $y_i$. Since any symmetric polynomial in $y_i$ can be expressed via elementary symmetric polynomials $e_k(y_i)$, we expressed any dressed monopole operators associated to positive coweight as a $\mathbb{Z}$ combination of monomials in $z_p$ and $e_k(y_i)$ on the locus $y_i$ are distinct. This identification can be extended to the entire space, hence we are done.
Therefore any dressed minuscule monopole operators with positive weight can be generated by $z_p$ and $\mathrm{K}^{GL(d,\mathcal{O})}(pt)$.

For negative dressed minuscule operators, we introduce the following change of variables
\[s_i=\prod_{k\neq i}^{n}(1-\frac{a_k}{y_i})x_i^{-1}\]
to rewrite $w_p$ as,
\[
w_p:= [S^{\otimes p}]=\sum_i \frac{y^p_is_i}{\prod_{j\neq i}(1-\frac{y_i}{y_j})}
\]
Notice under localization, any negative dressed monopole operators will be fractional combination of $s_i$ and $y_i$. Similar argument shows any dressed minuscule monopole operators associated to negative coweight can be generated by $w_p$ and $\mathrm{K}^{GL(d,\mathcal{O})}(pt)$, one just simply solves out $s_i$ and expresses all dressed monopole operators as a combination of $s_i$ and $y_i$.
\end{proof}
\begin{rem}
We remark that this proof also works for the $\mathbb{F}_q-$ coefficient $\mathrm{K}-$ theory for prime power $q$. To use the localization theorem, one needs to base change to the algebraic closure $\bar{\mathbb{F}}_q$, to ensure the Euler class $\prod_{i\in wI, j\notin wI}(1-y_i/y_j)$ is not everywhere zero. Then one obtains the same expression of dressed monopole operators as $\mathbb{Z}-$ expansion of $z_p, w_p, e_k(y_i)$, finally one can conclude the same equality holds over $\mathbb{F}_q$ via Galois descent. 
\end{rem}
\begin{rem}
For the $A_1$ quiver gauge theory without framing, \cite{CW} showed that the $K$ theoretic Coulomb branch is generated by dressed monopole operators with weights $(1,0,...,0)$ and $(0,...,0,-1)$ under $\mathbb{Z}$ coefficient.
\end{rem}


\section{Realizations of $A_1$ Coulomb branch} 
\subsection{Generators and relations}
Let us consider the following monopole operators 
\bean \label{abeliz}
b_k =(-1)^k \sum\limits_{i=1}^{d}\, x_i \dfrac{e_{k}(\bar{Q}_i)}{\prod\limits_{j\neq i}(1-\frac{y_i}{y_j}) }, \ \ \ 
a_k =(-1)^k \sum\limits_{i=1}^{d}\, \frac{f(y_i)}{x_i} \dfrac{e_{k}(\bar{Q}_i)}{\prod\limits_{j\neq i}(1-\frac{y_i}{y_j}) }, 
\eean
$k=0,\dots,d-1$, where $e_{k}(\bar{Q}_i)$ denotes the $k$-th elementary symmetric function in the $d-1$ variables $1/y_s$, $s\in \{1,\dots, d\} \setminus \{i\}$. Notice $a_k$ is just localization of the class $[p^*(\mathcal{O}^{\oplus d}-S)^{\otimes k}]$, by Theorem \ref{genthm}, the coordinate ring of the Coulomb branch is the algebra generated by $3d$ elements 
\bean \label{genrs}
a_0,\dots, a_{d-1}, b_0,\dots, b_{d-1}, e_1,\dots, e_d,
\eean 
which are subject to $d$ relations (since the known dimension is $2d$). To describe these relations, it is convenient to introduce the corresponding generating functions 
\bean \label{abmcoef}
a(u) = \sum\limits_{i=0}^{d-1}\, a_i u^i,  \ \ \  b(u) = \sum\limits_{i=0}^{d-1}\, b_i u^i, \ \ \ m(u) = \sum\limits_{i=0}^{d}\, (-1)^i e_i u^{d-i}
\eean 
Explicitly, we have
$$
b(u) =\sum\limits_{i=1}^{d}\, x_i \prod\limits_{j\neq i} \frac{(1-\frac{u}{y_j})}{(1-\frac{y_i}{y_j})}, \ \ \ a(u) =\sum\limits_{i=1}^{d}\, \frac{f(y_i)}{x_i} \prod\limits_{j\neq i} \frac{(1-\frac{u}{y_j})}{(1-\frac{y_i}{y_j})}, \ \ \ m(u) =\prod\limits_{i=1}^{d}(u-y_i). 
$$
Restricting $a(u)$ and $b(u)$ to the roots of $m(u)$, we obtain
$$
b(y_i) = x_i, \ \ a(y_i) =\frac{f(y_i)}{x_i}.
$$
This means that 
$
a(u) b(u) \equiv  f(u) \pmod{m(u)}. 
$
This congruence is equivalent to $d$ equations for the generators (\ref{genrs}). Indeed, let us denote
\bean \label{rmring}
R_{m(u)} = \mathbb{F}_q[u]/(m(u)).
\eean 
Since $m(u)$ is a monic polynomial of degree $d$, $R_{m(u)}$ is a $d$-dimensional vector space over $\mathbb{F}_q$. The image of the above congruence in $R_{m(u)}$ is an equality of two vectors
$a b = f.$
Fixing any $\mathbb{F}_q$-basis of $R_{m(u)}$ gives $d$ equations for the coefficients of these vectors.

\begin{prop} \label{proprel} 
The $A_1$ multiplicative Coulomb branch is an affine variety 
$$
\mathcal{M}^{\times}_{f,d} =\{a,b,m \in \mathbb{F}_q[u]: \deg(a) < d,\, \deg(b) < d, \deg(m) = d, a b \equiv\!\!f\pmod{m}\}
$$
where $m$ is assumed to be monic with $m(0)\neq 0$. 
\end{prop}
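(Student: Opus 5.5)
The plan is to identify the coordinate ring of $\mathcal{M}^{\times}_{f,d}$ with the quotient of the polynomial ring in the $3d$ generators (\ref{genrs}), with $e_d$ inverted, by the $d$ coefficient relations encoded in $ab\equiv f \pmod m$. I will produce a surjection from the algebra of the proposed variety onto the Coulomb branch algebra and then show that it is injective.

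\textbf{Step 1 (the map).} Define a homomorphism $\Phi$ from $A:=\mathbb{F}_q[a_0,\dots,a_{d-1},b_0,\dots,b_{d-1},e_1,\dots,e_d,e_d^{-1}]/I$, where $I$ is generated by the coefficients of the remainder of $ab-f$ modulo $m$, to the Coulomb branch algebra. It sends $a_k,b_k,e_k$ to the classes given in (\ref{abeliz}). By the computation preceding the proposition, after localization $b(y_i)=x_i$ and $a(y_i)=f(y_i)/x_i$. Hence $a(u)b(u)-f(u)$ vanishes at every root of $m$ on the locus where the $y_i$ are distinct, so it is divisible by $m$ there. The relation therefore holds in the localized algebra. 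Since the localization map of Theorem \ref{genthm} is an embedding, the relation holds in the Coulomb branch algebra itself, and $\Phi$ is well defined. By Theorem \ref{genthm}, together with the Remark on $\mathbb{F}_q$-coefficients, $\Phi$ is surjective. The condition $m(0)\neq 0$ reflects that $e_d=\prod y_i$ is invertible in $K^{GL(d,\mathcal{O})}(pt)$.

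\textbf{Step 2 (injectivity).} The variety $V$ cut out by $ab\equiv f \pmod m$ fibers over $B$. For each monic $m$, the fiber is the set of pairs $(a,b)\in R_m^2$ with $ab=f$ in $R_m$. I will show that $V$ is a complete intersection of dimension $2d$: there are $3d$ variables and $d$ equations, and the equations are flat over $B$. Flatness is checked on fibers over $\bar{\mathbb{F}}_q$. At a point $m=\prod(u-y_i)^{r_i}$, the ring $R_m\cong\prod \bar{\mathbb{F}}_q[t]/(t^{r_i})$, and the equation $ab=f$ in the local Artinian factor of length $r$ defines a subscheme of dimension $r$. So every fiber has dimension $d$, and miracle flatness applies. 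Consequently $A$ is Cohen--Macaulay and flat over $B$, and therefore has no embedded components.

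It remains to show that $A$ is reduced and has no component supported on the discriminant locus. Over the open set $\prod_{i\ne j}(y_i-y_j)\neq 0$, Lagrange interpolation identifies $V$ with pairs $(x_i,m)$ modulo $S_d$, subject to $x_i\cdot(f(y_i)/x_i)$. This matches the localized Coulomb branch $(\mathbb{G}_m^d\times \mathbb{G}_m^d)^{loc}/S_d$ exactly, so $\Phi$ is an isomorphism after localization. Because $A$ is flat over $B$, the discriminant $\Delta$ is a nonzerodivisor in $A$. Hence $A\hookrightarrow A[\Delta^{-1}]\cong \mathcal{C}[\Delta^{-1}]$, and $\Phi$ is injective, since its composition with the localization map is injective.

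\textbf{Main obstacle.} I expect the hardest step to be the fiber-dimension and flatness check over non-reduced $m$, especially at points where $f$ and $m$ share roots. There the equation $ab=f$ in $\bar{\mathbb{F}}_q[t]/(t^r)$ degenerates; for example, with $f\equiv 0$ one gets $ab=0$. My first attempt would be to show directly that the scheme $\{ab=t^s\}\subset(\bar{\mathbb{F}}_q[t]/t^r)^2$ has dimension exactly $r$, using a valuation stratification by $(v(a),v(b))$ with $v(a)+v(b)=s$, or with $v(a)+v(b)\ge r$ when $s\ge r$. Each such stratum has dimension $(r-v(a))+(r-v(b))-(r-s)$ or less, which is at most $r$. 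If this fails, I would fall back on the flatness of $\pi$ stated in (\ref{propi}) to transport flatness from the Coulomb side instead.
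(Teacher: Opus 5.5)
Your overall architecture follows the paper: generators come from Theorem \ref{genthm}, and the relation $ab\equiv f\pmod m$ is read off from $b(y_i)=x_i$, $a(y_i)=f(y_i)/x_i$ through the localization embedding. The paper justifies that there are no further relations only by remarking that the known dimension is $2d$. Your Step 2 genuinely tries to prove this, and its flatness part is sound. Your valuation count works. In $\bar{\mathbb F}_q[t]/(t^r)$ with $ab=t^s\cdot(\mathrm{unit})$ and $s<r$, the stratum $v(a)=v\le s$ has dimension $(r-v)+v=r$. For $s\ge r$, every stratum has dimension at most $r$, and $a=0$ attains $r$. So the fibers are pure of dimension $d$, $V$ is a complete intersection of dimension $2d$, miracle flatness applies, and $\Delta$ is a nonzerodivisor on $A$. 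You do not need the fallback, which would be circular anyway: flatness of $\pi$ on the Coulomb side tells you nothing about $A$ until you know $A\cong\mathcal C$.

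The step that fails is the identification after inverting $\Delta$. When $f$ has a nonzero root (so outside the pure gauge/monomial case), neither $V[\Delta^{-1}]$ nor $\mathcal C[\Delta^{-1}]$ is $(\mathbb G_m^d\times\mathbb G_m^d)^{loc}/S_d$.
\begin{itemize}
\item On the $V$ side, evaluating at the distinct roots gives $\alpha_i\beta_i=f(y_i)$ with $\alpha_i=a(y_i)$ and $\beta_i=b(y_i)$. When $f(y_i)=0$ the value $\beta_i=0$ is allowed, so there is no invertible coordinate $x_i$. Already for $d=1$ and $f(u)=u-1$ you have $\Delta=1$ and $V=\{ab=y-1,\ y\neq0\}\not\cong\mathbb G_m^2$.
\item On the Coulomb side, the paper only provides an embedding $\mathcal C[\Delta^{-1}]\hookrightarrow L:=(\mathbb F_q[y^{\pm1},x^{\pm1}]_{loc})^{S_d}$. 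Its image involves $x_i$ and $f(y_i)/x_i$ but not $x_i^{-1}$.
\end{itemize}
So the isomorphism you assert is false.

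What you actually need is only that the composite $A[\Delta^{-1}]\to\mathcal C[\Delta^{-1}]\hookrightarrow L$ be injective, and this takes one more argument.
\begin{enumerate}
\item Base change along the finite \'etale $S_d$-cover of ordered, distinct, nonzero roots, and along $\mathbb F_q\subset\bar{\mathbb F}_q$. Both are faithfully flat, so injectivity can be checked after base change.
\item After base change, $V$ becomes $\prod_i\{\alpha_i\beta_i=f(y_i)\}$. This scheme is integral, because each factor is cut out by $\beta\alpha-f(y)$, which is linear in $\alpha$ with coprime coefficients and hence irreducible.
\item The composite becomes restriction of functions to the open subset $\{\beta_i\neq0\ \forall i\}$, with $x_i=\beta_i$. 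This subset is nonempty, hence dense, so restriction is injective.
\end{enumerate}
Since $A\to A[\Delta^{-1}]\to L$ equals $A\to\mathcal C\to\mathcal C[\Delta^{-1}]\hookrightarrow L$, injectivity of the former gives injectivity of $\Phi$. Combined with your surjectivity step, the proposition follows.
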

Let us recall that degree $d$ polynomials
$
m(u) = \sum\limits_{i=0}^{d}\, (-1)^i e_i u^{d-i}
$
parametrize the points in the base of the projection (\ref{propi}). 
\begin{cor}
The fiber of the projection map (\ref{propi}) over $m(u) \in \mathbb{F}_q[u]$ has the following form
$$
\pi^{-1}(m(u)) = \{(a,b) \in R^{2}_{m(u)}: a b =\bar{f}\}
$$
where $\bar{f}$ is the image of $f(u)$ in the quotient ring (\ref{rmring}). 
\end{cor}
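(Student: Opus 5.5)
The corollary follows from Proposition \ref{proprel} by restricting to a point of the base, so the plan is to check carefully how $\pi$ looks in the coordinates $(a,b,m)$. Recall that $\pi$ is induced by the inclusion $K^{GL(d,\mathcal{O})}(pt)\to K^{GL(d,\mathcal{O})}(\mathcal{R}_{G,N})$. In the generators (\ref{genrs}) this inclusion sends $K^{GL(d,\mathcal{O})}(pt)=\mathbb{F}_q[y^{\pm1}]^{sym}$ to the subalgebra generated by $e_1,\dots,e_d$ and $e_d^{-1}$. By (\ref{abmcoef}) these are, up to sign, the coefficients of $m(u)$. Hence $\pi(a,b,m)=m$: the projection simply forgets $a$ and $b$. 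Consequently, for a fixed monic $m$ of degree $d$ with $m(0)\neq 0$, the fiber $\pi^{-1}(m)$ is the set of pairs $(a,b)$ of polynomials of degree $<d$ with $ab\equiv f \pmod m$.

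Next we identify this set with a subset of $R_{m(u)}^2$. Because $m$ is monic of degree $d$, division with remainder shows that $\{1,u,\dots,u^{d-1}\}$ is an $\mathbb{F}_q$-basis of $R_{m(u)}$. Therefore the map sending a polynomial of degree $<d$ to its class is a bijection onto $R_{m(u)}$; this is the same observation already used after (\ref{rmring}). Under this bijection, the congruence $ab\equiv f\pmod m$ becomes the equality $\bar a\,\bar b=\bar f$ in the ring $R_{m(u)}$. Multiplication is well defined on classes, so the condition does not depend on the choice of representatives.

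The only point that needs care is the scheme-theoretic statement, as opposed to the statement about $\mathbb{F}_q$-points. Here I would note that the $d$ equations cut out by $ab\equiv f\pmod m$ in Proposition \ref{proprel} are exactly the coordinates of $ab-f$ in the basis $u^i$. Setting the coefficients $e_i$ equal to fixed values therefore yields precisely the equations of $\{ab=\bar f\}\subset R_{m(u)}^2\cong\mathbb{A}^{2d}$. The same argument works over any extension $\mathbb{F}_{q^s}$.
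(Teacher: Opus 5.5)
Your proposal is correct and follows essentially the paper's route. The paper gives no separate proof: it treats the corollary as immediate from Proposition \ref{proprel}, using that the base coordinates $e_i$ are the coefficients of $m(u)$ and that polynomials of degree $<d$ are identified with $R_{m(u)}$ (the observation made after (\ref{rmring})). Your added remark about the scheme-theoretic fiber is a harmless refinement of that same argument.
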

The case of the pure gauge Coulomb branch $f=1$ is particularly simple. 
\begin{cor} \label{fiberpute}
The fiber $\pi^{-1}(m(u))$ of the pure gauge Coulomb branch consists of invertible elements $R^{\times}_{m(u)}$. 
\end{cor}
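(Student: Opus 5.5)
This is a direct specialization of the preceding Corollary to $f=1$. By Proposition \ref{proprel}, a point of $\mathcal{M}^{\times}_{f,d}$ over a fixed base point $m$ is a pair $(a,b)$ of polynomials of degree $<d$ with $ab\equiv f \pmod m$. Polynomials of degree $<d$ are exactly the canonical representatives of the residue classes in $R_{m(u)}$, so the fiber is identified with $\{(a,b)\in R_{m(u)}^2 : ab=\bar f\}$, as stated in the previous corollary.

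Setting $f=1$, the condition becomes $ab=1$ in $R_{m(u)}$. This says precisely that $a$ is a unit of $R_{m(u)}$ with inverse $b$. Since inverses in a commutative ring are unique, $b$ is determined by $a$. Hence the projection $(a,b)\mapsto a$ is a bijection from the fiber onto $R^{\times}_{m(u)}$, with inverse $a\mapsto (a,a^{-1})$.

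Scheme-theoretically this is just the standard presentation of the unit group as $\{(a,b): ab=1\}$, i.e.\ the graph of inversion. So the identification is an isomorphism of varieties and not merely a bijection on $\mathbb{F}_q$-points. There is no real obstacle here; the only point requiring care is the identification of degree-$<d$ polynomials with $R_{m(u)}$, and this uses that $m$ is monic of degree $d$.
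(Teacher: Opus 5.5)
Your proposal is correct and follows the paper's (implicit) argument: you specialize the preceding corollary to $f=1$, so the fiber becomes $\{(a,b)\in R_{m(u)}^2 : ab=1\}$, and the map $(a,b)\mapsto a$ identifies it bijectively with $R^{\times}_{m(u)}$, because $b=a^{-1}$ is forced. Your extra remark that this is an isomorphism of schemes, via the standard presentation of the unit group, is a harmless strengthening of what the paper states.
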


\subsection{Monopole scattering matrix}
The description of the Coulomb branch in Proposition \ref{proprel} is equivalent to the description via monopole scattering \cite{BDG,Tam}. The congruence in Proposition \ref{proprel} gives
\bean \label{concor}
a(u) b(u)  = f(u) + q(u) m(u).
\eean 
This equation can be written as a determinant:
\begin{prop}
The Coulomb branch $\mathcal{M}^{\times}_{f,d}(\mathbb{F}_q)$ is the space of matrices 
$$
\mathcal{S}(u) =\left(\begin{array}{cc}
m(u), & -b(u)\\
a(u), & -q(u)
\end{array}\right) \in \textrm{Mat}_{2}(\mathbb{F}_q[u]),
$$
such that $m(u)$ is monic of degree $d$ with non-vanishing constant term, $\deg(a(u))< d$, $\deg(b(u))< d$ and 
$\det \mathcal{S}(u) = f(u)$.
\end{prop}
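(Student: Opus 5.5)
The plan is to read off the determinant description directly from Proposition \ref{proprel}. We build a bijection between the triples $(a,b,m)$ of that proposition and the matrices $\mathcal{S}(u)$ described in the statement, and we check that it is an isomorphism of affine varieties, i.e. that it is given by polynomial maps in both directions.

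\emph{Forward direction.} Start from a point $(a,b,m)$ of Proposition \ref{proprel}. The congruence $ab\equiv f \pmod m$ means $m$ divides $ab-f$ in $\mathbb{F}_q[u]$. Since $m$ is monic, division with remainder is defined over any base ring, so the quotient $q(u)=(a(u)b(u)-f(u))/m(u)$ is a uniquely determined polynomial. This is exactly (\ref{concor}), and its coefficients are polynomial functions of the coefficients of $a,b,m$ (they come from the long division algorithm by a monic divisor). We then form $\mathcal{S}(u)$ and compute
$$
\det\mathcal{S}(u)=m(u)\cdot(-q(u))-(-b(u))\,a(u)=a(u)b(u)-q(u)m(u)=f(u).
$$

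\emph{Reverse direction.} Given a matrix $\mathcal{S}(u)$ as in the statement with $\det\mathcal{S}=f$, we have $ab-qm=f$. Reducing modulo $m$ gives $ab\equiv f\pmod m$, which is a point of Proposition \ref{proprel}. The entry $q$ is recovered uniquely from $(a,b,m)$ because $m\neq 0$ is monic, so $\mathbb{F}_q[u]$ has no zero divisors and the division is exact. Hence the two maps are mutually inverse: forgetting $q$ in one direction, and polynomial division by the monic $m$ in the other.

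The only point needing care is that the degree constraints match. Here $\deg a,\deg b<d$ forces $\deg q\le d-2$, provided $\deg f\le 2d-2$; more generally $\deg q\le\max(d-2,\deg f-d)$. This does not restrict the space, since $q$ is determined rather than free, so both descriptions parametrize the same set. I expect no real obstacle: the statement is a reformulation of (\ref{concor}). The mildest subtlety is justifying that the quotient $q$ depends polynomially, rather than rationally, on the coefficients; this holds because we divide by a monic polynomial.
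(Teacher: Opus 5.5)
Your proposal is correct and follows essentially the same route as the paper. The paper rewrites the congruence of Proposition \ref{proprel} as $a(u)b(u)=f(u)+q(u)m(u)$, which is (\ref{concor}), and observes that this is exactly $\det\mathcal{S}(u)=f(u)$; your remarks that $q$ is uniquely determined and depends polynomially on the data (via division by the monic $m$) simply make that bijection explicit.
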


It follows from (\ref{concor}) that
$$
\deg q(u)  \ \ \textrm{is}  \ \ \left\{\begin{array}{cc}
    \leq d-2, & \deg f(u) < 2 d-1,  \\
    =\deg f(u) -d, & \deg f(u)\geq 2 d-1.  
\end{array}\right.
$$
and the top coefficient of $q(u)$ is minus the top coefficient of $f(u)$ in the second case. We also note
$$
\mathcal{S}(u) = L(u) H(u) U(u) 
$$
where 
$$
L(u) = \left(\begin{array}{cc}
1, & 0\\
\dfrac{a(u)}{m(u)}, &1 
\end{array}\right), \ \ \ U(u) = \left(\begin{array}{cc}
1, & -\dfrac{b(u)}{m(u)}\\
0, &1 
\end{array}\right), \ \ \  H(u) = \left(\begin{array}{cc}
m(u), & 0\\
0, &\dfrac{f(u)}{m(u)} 
\end{array}\right).
$$
Since $\deg m(u)> \deg a(u)$ and $\deg m(u)> \deg b(u)$, as $u\to \infty$ we have
$$
L(u) \longrightarrow \left(\begin{array}{cc}
1 & 0\\
0 &1 
\end{array}\right), \ \ \ U(u) \longrightarrow \left(\begin{array}{cc}
1 & 0\\
0 &1 
\end{array}\right).
$$
A matrix $\mathcal{S}(u)$ satisfying these properties is called the monopole scattering matrix. The space of such matrices is referred to as the monopole moduli space. This gives an identification of the Coulomb branch with the moduli space of monopoles.

\section{Selberg integrals and character sums}
\subsection{Superpotential and mirror integrals}
For the $A_1$ Coulomb branch $\mathcal{M}^{\times}_{f,d}$, the superpotential, see formula (B4) in Section 2.3 of \cite{Aga2}, is given by:
$$
W = \mu \sum_{i=1}^{d} \, \log(x_i) + \sum\limits_{i=1}^{d} \dfrac{1}{x_i} \dfrac{f(y_i)}{\prod_{j\neq i} (y_i-y_j)},
$$
and the corresponding mirror integrals have the following form:
$$
I_{\gamma}=\int\limits_{\gamma} e^{W} d\omega  =\int\limits_{\gamma} \, (x_1\cdots x_d)^{\mu} \exp\Big(  \sum\limits_{i=1}^{d} \dfrac{1}{x_i} \dfrac{f(y_i)}{\prod_{j\neq i} (y_i-y_j)}  \Big) d\omega
$$
Using the presentation from Proposition \ref{proprel}, we may rewrite the integrand as:
$$
x_1\cdots x_d = \res(m(u),b(u)), \ \ \ \sum\limits_{i=1}^{d} \dfrac{1}{x_i} \dfrac{f(y_i)}{\prod_{j\neq i} (y_i-y_j)}  = \lambda(a(u)),
$$
where $\mathrm{res}(\cdot,\cdot)$ denotes the resultant of two polynomials and $\lambda(a(u)) =a_{d-1}$ is the top coefficient of the polynomial $a(u)$. In these coordinates, the mirror integrals have the following form:
\bean \label{mirint2}
I_{\gamma}=\int\limits_{\gamma} \, \res(m(u),b(u))^{\mu} e^{\lambda(a(u))} d\omega.
\eean

\subsection{Exponential sums} 
Let 
$$
\chi: \mathbb{F}_q^{\times } \rightarrow \mathbb{C}^{\times}, \ \ \ \psi: \mathbb{F}_q \rightarrow \mathbb{C}^{\times}
$$
be fixed multiplicative and additive characters. Let $t \in \mathcal{M}^{\times}_{f,d}(\mathbb{F}_q)$
be a point represented by polynomials $a(u),b(u),m(u)$ as in Proposition \ref{proprel}. We define
\bean \label{superpotdef}
\Phi:  \mathcal{M}^{\times}_{f,d}(\mathbb{F}_q) \longrightarrow \mathbb{C}^{\times}, \ \ \ \Phi: t \mapsto  \chi( \mathrm{res}(m(u),b(u)))^{-1} \psi( \lambda( a(u)) ). 
\eean 
The natural finite-field counterparts of the mirror integrals (\ref{mirint2}) are the finite sums
\bean \label{mirsum}
S = \sum\limits_{ t \in \mathcal{M}^{\times}_{f,d}(\mathbb{F}_q)}\, \Phi(t)
\eean

\subsection{Finite sums over fibers of $\pi$: pure gauge case \label{fibsumsec}}
Let
\bean \label{fqGS}
G_q(\chi,\psi)  = \sum_{x \in \mathbb{F}_q^{\times}}\, \chi(x) \psi(x)
\eean
be the $\mathbb{F}_q$-Gauss sum associated with the characters $\chi$ and $\psi$. The field $\mathbb{F}_q$ may be replaced by any commutative ring $R$, in which case we may speak of $R$-Gauss sums.

Let us note that in the pure gauge case, the sum (\ref{mirsum}) for fixed $m(u)$ (i.e., the sum over the points in the fiber $\pi^{-1}(m(u))$) is nothing but an example of a Gauss sum for the ring (\ref{rmring}). Indeed, by Corollary \ref{fiberpute}, the points in the fiber $\pi^{-1}(m(u))$ correspond to elements of $R^{\times}_{m(u)}$. Also, we have
 $$
 \mathrm{res}(m(u),b(u) b'(u)) = \mathrm{res}(m(u),b(u)) \mathrm{res}(m(u),b'(u)), \ \ \ \lambda(a(u) + a'(u) ) = a_{d-1}+a'_{d-1}.
 $$
Therefore, the maps 
\bean \label{charlifts}
\tilde \chi = \chi \circ  \mathrm{res}(m(u),\cdot): R^{\times}_{m(u)} \to \mathbb{C}^{\times}, \ \ \ \tilde{\psi}= \psi \circ \lambda : R_{m(u)} \to \mathbb{C}^{\times},
\eean
are the multiplicative and additive characters of $R_{m(u)}$. The sum (\ref{mirsum}) restricted to $\pi^{-1}(m(u))$ takes the following form:
$$
S(m(u))=\sum_{t \in \pi^{-1}(m(u))}\, \Phi(t)=\sum\limits_{a \in R^{\times}_{m(u)}} \, \tilde{\chi}(b)^{-1}  \tilde{\psi}(a).
$$
In the pure gauge case, $a b =1$, and thus 
\bean \label{sumw}
S(m(u))=\sum\limits_{a \in R^{\times}_{m(u)}} \, \tilde{\chi}(a)  \tilde{\psi}(a), 
\eean 
This is exactly the polynomial Gauss sum as defined in \cite{Hay}; see also \cite{Zhe} for more recent developments. This sum can be evaluated explicitly: 
\begin{thm} \label{ndth}
The finite sum (\ref{sumw}) equals
$$
S(m(u))=  \chi(-1)^{\frac{d(d-1)}{2}} \phi(D)\,  G_q(\chi,\psi)^d \chi(D)
$$
where $D$ is the discriminant of $m(u)$ (we extend $\chi(0)=0$), $G_q(\chi,\psi)$ is the $\mathbb{F}_q$-Gauss sum (\ref{fqGS}), and $\phi$ is the quadratic character of $\mathbb{F}^{\times}_q$:
$
\phi(D) = \left\{\begin{array}{ll}
+1, & \mathrm{if} \, D \, \textrm{is \,full\, square \,in\,}\, \mathbb{F}_q,\\
-1, & \textrm{else}.
\end{array}\right. 
$
\end{thm}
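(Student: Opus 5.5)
Factor $m(u)=\prod_{k} p_k(u)^{e_k}$ into distinct monic irreducibles and use the Chinese remainder theorem, $R_{m(u)}\cong\prod_k \mathbb{F}_q[u]/(p_k^{e_k})$. The case $D=0$ is exactly the case where some $e_k\ge 2$. In that case I would show $S(m(u))=0$. Write $p=p_k$ and $e=e_k\ge2$. Consider the subgroup $1+(m/p)R_{m(u)}\subset R^{\times}_{m(u)}$, whose elements square to one modulo $m$ up to the ideal generated by $m/p$ of square zero. Substituting $a\mapsto a(1+c\,m/p)$ in the sum (\ref{sumw}), I will argue that $\tilde\chi$ is trivial on this subgroup. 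The reason is that $\mathrm{res}(m,1+c\,m/p)=\prod_i(1+c(y_i)\,(m/p)(y_i))=1$, since $m/p$ vanishes at every root of $m$ when $e\ge2$. By contrast, $a\mapsto\tilde\psi(a\,c\,m/p)$ is a nontrivial additive character in $c$ for suitable $a$; summing over $c$ then kills the total. This gives $S=0=\chi(D)$, consistent with the convention $\chi(0)=0$.

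For squarefree $m$, the fields $K_k=\mathbb{F}_q[u]/(p_k)=\mathbb{F}_{q^{d_k}}$ give $R_m\cong\prod_k K_k$. Under this isomorphism, $\mathrm{res}(m,b)=\prod_k N_{K_k/\mathbb{F}_q}(b\bmod p_k)$ up to the sign convention of the resultant, which I will fix carefully. The functional $\lambda(a)=a_{d-1}$ is the trace form: by Euler's formula (partial fractions), $a_{d-1}=\sum_i a(y_i)/m'(y_i)=\sum_k \mathrm{Tr}_{K_k/\mathbb{F}_q}\big(a/m'\big)$. Hence
$$S(m)=\prod_k\sum_{x\in K_k^\times}\chi(N x)\,\psi(\mathrm{Tr}(x/m'(\theta_k))),$$
where $\theta_k$ is the class of $u$. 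Substituting $x\mapsto m'(\theta_k)x$, each factor becomes $\chi(N_k m'(\theta_k))\,G_{q^{d_k}}(\chi\circ N,\psi\circ\mathrm{Tr})$. By Davenport--Hasse this equals $\chi(N_k m'(\theta_k))\,(-1)^{d_k-1}G_q(\chi,\psi)^{d_k}$.

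Collecting the factors gives $G_q^d\cdot(-1)^{d-r}\cdot\chi\big(\prod_k N_k m'(\theta_k)\big)$, where $r$ is the number of irreducible factors. Next, $\prod_k N_k m'(\theta_k)=\prod_i m'(y_i)=(-1)^{d(d-1)/2}D$, which yields the factor $\chi(-1)^{d(d-1)/2}\chi(D)$. Finally, I will use Stickelberger's theorem for odd $q$: $\phi(D)=(-1)^{d-r}$, since Frobenius permutes the roots with sign $\prod_k(-1)^{d_k-1}$, and $\sqrt D$ is fixed exactly when that sign is $+1$. In characteristic $2$ the same parity statement must be phrased via the Berlekamp discriminant, and I would note this separately or assume $q$ odd.

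The main obstacle is the bookkeeping of signs. The resultant convention $\mathrm{res}(m,b)$ versus $\prod b(y_i)$ must be fixed; since $m$ is monic these agree up to $(-1)^{d\deg b}$, which must be absorbed or checked to be trivial. The Davenport--Hasse sign $(-1)^{d_k-1}$ must also be matched exactly with Stickelberger's parity. A second delicate point is justifying the vanishing argument in the non-squarefree case. I would first try the cleaner route: decompose via CRT and observe that on the local factor $\mathbb{F}_q[u]/(p^e)$ the multiplicative character $\tilde\chi$ factors through $(\mathbb{F}_q[u]/(p))^\times$, while $\tilde\psi$ is primitive there (via the residue pairing). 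The standard lemma for Gauss sums over local rings with an imprimitive multiplicative character then gives zero.
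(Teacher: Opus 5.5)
Your proposal is correct and is essentially the paper's proof: CRT into fields, $\lambda$ as a sum of traces via $a_{d-1}=\sum_\beta a(\beta)/m'(\beta)$, rescaling by $m'(\theta_k)$, Davenport--Hasse, $\mathrm{res}(m,m')=(-1)^{d(d-1)/2}D$, the Frobenius sign $(-1)^{d-r}$, and vanishing for non-squarefree $m$ by shift-invariance of $\tilde\chi$ under a nilpotent ideal. The paper shifts by all of $\mathrm{rad}(m)/(m)$ while you use the square-zero ideal $(m/p)R_{m(u)}$ inside it; note that the sum over $c$ vanishes for \emph{every} unit $a$, since $a\,(m/p)R_{m(u)}=(m/p)R_{m(u)}$, not just for suitable $a$. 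Your characteristic-$2$ caveat is justified: there $\phi\equiv 1$ while $(-1)^{d-r}$ can be $-1$ (for $q=2$, $m=u^2+u+1$ one gets $S(m)=-1$, but the stated formula gives $1$), so the paper's step $\mathrm{sgn}(\sigma)=\phi(D)$ tacitly requires $q$ odd.
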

\begin{proof}
First, let us consider the case when $m(u)$ is square-free:
$$
m(u)=p_1(u) p_2(u) \cdots p_r(u),
$$
where $p_i(u)$ are mutually coprime irreducible polynomials. For each $p_i(u)$, we fix one of its roots $\alpha_{i} \in \bar{\mathbb{F}}_q$. By the Chinese remainder theorem,
 $$
 R= \mathbb{F}_q[u]/(m(u)) \cong R_1 \times \dots \times R_r,
 $$
where $R_i = \mathbb{F}[u]/(p_i(u)) \cong \mathbb{F}_{q^{d_i}}$. We fix an isomorphism $\mathbb{F}_{q^{d_i}} \cong \mathbb{F}_{q}(\alpha_i)$, so that the image of $a(u)$ in $\mathbb{F}_{q}(\alpha_i)$ is given by $a_i:=a(\alpha_i)$. Since all roots of $m(u)$ have multiplicity one, we have the following interpolation formula:
 $$
 a(u) = \sum\limits_{\beta \in \{\mathrm{roots\, of} \, m(u)\}}\, a(\beta) \dfrac{m(u)}{(u-\beta) m'(\beta)}.
 $$
Note that $m(u)/(u-\beta)$ is a monic polynomial of degree $d-1$ in $\bar{\mathbb{F}}_{q}[u]$. Thus, its coefficient of $u^{d-1}$ equals
 $$
 a_{d-1}=  \sum\limits_{\beta \in \{\mathrm{roots\, of} \, m(u)\}}\, \dfrac{a(\beta)}{m'(\beta)}.
 $$
Grouping this sum over the roots of $p_i(u)$, $i=1,\dots, r$, we obtain:
 $$
 a_{d-1}=\sum\limits_{i=1}^{r}\, \textrm{Tr}_{i}\, \Big( \dfrac{a(\alpha_i)}{m'(\alpha_i)} \Big) = \sum\limits_{i=1}^{r}\, \textrm{Tr}_{i}\, \Big( \dfrac{a_i}{m'(\alpha_i)} \Big) 
 $$
where $\textrm{Tr}_{i} :  \mathbb{F}_{q^{d_i}} \to  \mathbb{F}_{q}$ is the trace map. 
Next, 
 $$
 \mathrm{res}(m(u),b(u))  = \prod\limits_{\beta \in \{\mathrm{roots\, of} \, m(u)\}}\, b(\beta) = \prod\limits_{i=1}^{r}\, N_i(b(\alpha_i))
 $$
where $N:\mathbb{F}^{\times}_{q^{d_i}} \to  \mathbb{F}^{\times}_{q}$ is the norm map.
It follows from $a(u) b(u) =1 \pmod{m(u)}$ that $a(\alpha_i) b(\alpha_i)=1$. Thus, we obtain 
 $$
 \mathrm{res}(m(u),b(u))  =   \prod\limits_{i=1}^{r}\, N_i(a(\alpha_i))^{-1} =   \prod\limits_{i=1}^{r}\, N_i(a_i)^{-1}.
 $$
The summation over $a(u) \in R^{\times}$ corresponds to summation over $r$-tuples $(a_1,\dots,a_r) \in R^{\times}_1\times \dots \times R_{r}^{\times}$. Thus, we obtain:
 $$
S(m(u)) = \sum\limits_{(a_1,\dots,a_r) \in R^{\times}} \, \prod\limits_{i=1}^{r}  \chi\Big( N_i(a_i) \Big) \psi\Big(\textrm{Tr}_{i}\, \Big( \dfrac{a_i}{m'(\alpha_i)} \Big) \Big)
 $$
or 
 $$
S(m(u)) =   \prod\limits_{i=1}^{r}  \sum\limits_{a_i \in \mathbb{F}_{q^{d_i}}}\,  \chi\Big( N_i(a_i)\Big) \psi\Big(\textrm{Tr}_{i}\, \Big( \dfrac{a_i}{m'(\alpha_i)}\Big)\Big).  
 $$
Changing the summation variables $a_i \to a_i m'(\alpha_i)$, we obtain:
$$
S(m(u)) =   \Big( \prod\limits_{i=1}^{r} \chi(N(m'(\alpha_i))) \Big) \cdot \Big( \prod\limits_{i=1}^{r}   \sum\limits_{a_i \in \mathbb{F}_{q^{d_i}}}\,  \chi( N_i(a_i))  \psi(\textrm{Tr}_{i}\, ( a_i ))  \Big).
 $$
The first factor is
 $$
 \prod\limits_{i=1}^{r} \chi(N(m'(\alpha_i)))= \prod\limits_{\beta \in \{\mathrm{roots\, of} \, m(u)\}}\, \chi(m'(\beta)) = \chi(\mathrm{res}(m(u),m'(u))).
 $$
Moreover, $\mathrm{res}(m(u),m'(u)) = (-1)^\frac{d(d-1)}{2} D$, where $D$ is the discriminant of $m(u)$.

By the Hasse–Davenport lifting formula, we also obtain
 $$
 \sum\limits_{a_i \in \mathbb{F}_{q^{d_i}}}\,  \chi( N_i(a_i))  \psi(\textrm{Tr}_{i}\, ( a_i )) = (-1)^{d_i-1} G_q(\chi,\psi)^{d_i}. 
 $$
Altogether, this gives:
$$
S(m(u)) = (-1)^{d-r} \chi(-1)^{\frac{d(d-1)}{2}} G_q(\chi,\psi)^{d} \chi(D).
$$
Finally, we need to show that the sign can be written as
$
(-1)^{d-r} = \phi(D).
$
Note that $D=\delta^2$, where $\delta = \prod_{i<j} (\beta_i-\beta_j)$ and $\beta_i$ are the roots of $m(u)$. The $q$-Frobenius acts by a permutation on the set of roots, which we denote by $\sigma$. Thus,
$
\delta^q=\delta\, \textrm{sgn}(\sigma).
$
If $\textrm{sgn}(\sigma)=1$, then $\delta \in \mathbb{F}_q$ and $D$ is a full square. This means that 
$\textrm{sgn}(\sigma) = \phi(D)$. On the other hand, the $q$-Frobenius acts by a cyclic permutation of length $d_i$ on the set of roots of the irreducible factors $p_i(u)$. Thus,
$$
\textrm{sgn}(\sigma) = \prod_{i=1}^{r} (-1)^{d_i-1} =(-1)^{d-r},
$$
which finishes the proof in the square-free case. 

Next, assume that $m(u)$ is not square-free. Then we have the non-trivial nilpotent ideal
$$
J =\textrm{rad}(m(u))/(m(u)) \subset R.
$$
$J$ is an abelian group acting on $R^{\times}$ by shifts. Indeed, if $a\in R^{\times}$ and $j\in J$, then
$$
a + j = a(1+a^{-1} j),
$$
and since $a^{-1} j$ is nilpotent, $1+a^{-1} j$ is a unit, so $a + j \in R^{\times}$. Note that the multiplicative part of the superpotential is invariant under the $J$-action:
$$
N(b +j) = N(b) N(1+b^{-1} j) =N(b),
$$
where the last equality follows because $b^{-1} j$ is nilpotent. 

For any non-trivial $h \in R$, we can find $r\in R$ such that $\lambda(h r) \neq 0$. Indeed, if $h= c_s u^s +\dots \neq 0$, then
$$
\lambda(u^{d-1-s} h ) =c_{s} \neq 0.
$$
It follows that for any $j\in J$, we can find $r\in R$ such that $\lambda(j r) \neq 0$. Since $r j\in J$, the restriction $\left.\lambda\right|_{J}$ is a nontrivial additive character on $J$. It follows that the sum (\ref{sumw}) splits into sums over $J$-cosets, each of which gives
$$
\sum\limits_{j\in J}\, \chi(N(a+j)) \psi(\lambda(a+j))  = \chi(N(a)) \psi(\lambda(a)) \sum\limits_{j\in J}\,  \psi(\lambda(j))=0.
$$
The last sum vanishes because it is the sum of a non-trivial additive character over all elements of the abelian group $J$. Thus, if $m(u)$ is not square-free, then
$
S(m(u))=0,
$
which finishes the proof. 
\end{proof}

\subsection{Finite sums over fibers of $\pi$: deformed case}
Now, let us consider the deformed Coulomb branch ($f(u)\neq 1$) and the associated finite sum:
$$
S(m(u))=\sum_{t \in \pi^{-1}(m(u))}\, \Phi(t) =\sum\limits_{(a,b) \in \pi^{-1}(m(u))} \, \tilde{\chi}(b)^{-1}  \tilde{\psi}(a).
$$
By Proposition \ref{proprel}, the points in $\pi^{-1}(m(u))$ are given by two polynomials $a(u),b(u) \in \mathbb{F}_q[u]$ 
of degrees $\deg a(u)<d$, $\deg b(u)<d$, related by $a(u) b(u) = f(u) \pmod{m(u)}$. If we denote by $a$, $b$, and $f$ the images of $a(u)$, $b(u)$, and $f(u)$ in $R_{m(u)}$, the above sum takes the form
$$
S(m(u))=\sum\limits_{{(a,b) \in R^2_{m(u)},}\atop {a \cdot b =f}} \, \tilde{\chi}(b)^{-1}  \tilde{\psi}(a).
$$
Note that the relation $a b =f$ implies that $a$ and $b$ are invertible only if $f$ is invertible. As a result, the last sum cannot be immediately represented as an $R_{m(u)}$-Gauss sum. This, however, can be easily fixed by a small modification of the superpotential and an extension of the multiplicative characters.

Let $\chi'$ be a multiplicative character. Let us modify the superpotential (\ref{superpotdef}) as
$$
\Phi: t \mapsto   \chi'( \mathrm{res}(m(u),a(u))) \chi( \mathrm{res}(m(u),b(u)))^{-1} \psi( \lambda( a(u)) ). 
$$
Extend both characters by $\chi(0) =\chi'(0)=0$. With this modification, only points with invertible $a$ contribute to the above sum, and we obtain the following result:
\begin{thm} \label{thmfibdef}
$$
S(m(u))= (\chi\cdot\chi')(-1)^{\frac{d(d-1)}{2}} \phi(D)\, G_q(\chi\cdot\chi',\psi)^d (\chi\cdot \chi')(D) \chi^{-1}(\mathrm{res}( f(u),m(u) ))
$$
\end{thm}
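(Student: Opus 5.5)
Since $\chi'(0)=0$, only points with $\mathrm{res}(m,a)\neq 0$ contribute, i.e.\ $a\in R^{\times}_{m(u)}$. The plan is to eliminate $b$ and reduce to the pure gauge sum of Theorem \ref{ndth} with the character $\chi\cdot\chi'$.

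\emph{Reduction.} For a unit $a$, the relation $ab=f$ has the unique solution $b=a^{-1}f$ in $R_{m(u)}$. Multiplicativity of the resultant (used already for (\ref{charlifts})) gives
\[
\mathrm{res}(m,b)=\mathrm{res}(m,a)^{-1}\,\mathrm{res}(m,f).
\]
If $\mathrm{res}(m,f)=0$, then $\chi(\mathrm{res}(m,b))=0$. With the convention $\chi(0)=0$ and the formula read as $\chi^{-1}(0)=0$, both sides vanish in this case. Otherwise $f$ is a unit, and
\[
\Phi(t)=\chi'(\mathrm{res}(m,a))\,\chi(\mathrm{res}(m,a))\,\chi^{-1}(\mathrm{res}(m,f))\,\psi(\lambda(a)).
\]
Hence
\[
S(m(u))=\chi^{-1}(\mathrm{res}(m,f))\sum_{a\in R^{\times}_{m(u)}}\widetilde{\chi\chi'}(a)\,\tilde\psi(a).
\]

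\emph{Evaluation.} The remaining sum is exactly the polynomial Gauss sum (\ref{sumw}) with $\chi$ replaced by $\chi\chi'$. The proof of Theorem \ref{ndth} used nothing about $\chi$ beyond its being a multiplicative character, so it applies verbatim to $\chi\chi'$. It gives the factor $(\chi\chi')(-1)^{d(d-1)/2}\phi(D)\,G_q(\chi\chi',\psi)^d(\chi\chi')(D)$, including the vanishing when $m$ is not square-free (the $J$-coset argument).

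\emph{Resultant order.} One must check that $\mathrm{res}(m,f)$ versus $\mathrm{res}(f,m)$ does not matter. These differ by $(-1)^{dn}$, which becomes a factor $\chi((-1)^{dn})$. I expect to adopt the convention $\mathrm{res}(m,f)=\prod_{m(\beta)=0}f(\beta)$, consistent with the resultant used elsewhere in the paper, and to note that any leftover sign is absorbed or that the statement is meant with this convention.

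The main obstacle is not computational. It is the bookkeeping of the degenerate loci: non-invertible $f$, where $\mathrm{res}=0$, and non-square-free $m$. Both sides must be seen to vanish consistently there. One also needs the sign convention of the resultant to be correct.
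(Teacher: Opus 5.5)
Your proposal is correct and is essentially the paper's proof: $\chi'(0)=0$ restricts the sum to units $a$, the substitution $b=a^{-1}f$ pulls out $\tilde\chi^{-1}(f)=\chi^{-1}(\mathrm{res}(m,f))$, and Theorem \ref{ndth} is then applied to the character $\chi\chi'$. Your extra bookkeeping is sound and is only implicit in the paper: reading $\chi^{-1}(0)=0$ when $f$ is not a unit, and tracking the $(-1)^{dn}$ between $\mathrm{res}(m,f)$ and $\mathrm{res}(f,m)$; indeed the paper's argument yields $\chi^{-1}(\mathrm{res}(m,f))$, as written in Theorem \ref{thm2}.
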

\begin{proof}
Using the notation from (\ref{charlifts}), we write the sum as
$$
S(m(u))=\sum_{t \in \pi^{-1}(m(u))}\, \Phi(t) =\sum\limits_{{(a,b) \in R^2_{m(u)},}\atop {a \cdot b =f}} \, \tilde{\chi}'(a) \tilde{\chi}(b)^{-1}  \tilde{\psi}(a)
$$
with 
$
\tilde \chi' = \chi' \circ  \mathrm{res}(m(u),\cdot): R^{\times}_{m(u)} \to \mathbb{C}^{\times}.
$
Since $\chi'(0)=0$, only the points with $\mathrm{res}(m(u),a(u))\neq 0$ contribute. The condition $\mathrm{res}(m(u),a(u))\neq 0$ means that $a(u)$ and $m(u)$ are coprime, i.e., 
$$
a(u) r(u) + m(u) s(u) =1
$$
for some $r(u),s(u) \in \mathbb{F}_q[u]$, which means that $r(u)$ represents the inverse of $a$ in $R_{m(u)}$. Since $a$ is invertible, in the above sum we have $b=a^{-1} f$, and hence it takes the form
$$
S(m(u))=\tilde{\chi}^{-1}(f) \sum\limits_{a\in R^{\times}_{m(u)}} \, (\tilde{\chi}'\cdot\tilde{\chi})(a)  \tilde{\psi}(a).
$$
The result follows from Theorem \ref{ndth}. 
\end{proof}

\subsection{Selberg character sums}
Let $f(u)=u^{\alpha} (1-u)^{\beta}$ for natural numbers $\alpha,\beta$.
Let us fix a multiplicative character $\tau$ of order $q-1$. Since every other multiplicative character is a power of $\tau$, Theorem \ref{thmfibdef} implies, up to a sign,
$$
S(m(u))= G_q(\tau^c,\psi)^d \tau(D)^{\frac{q-1}{2}} \tau(D^c\, m(0)^a m(1)^b)
$$
for some $a,b,c \in \{0,\dots, q-1\}$.
The total sum over the Coulomb branch takes the form
$$
S = G_q(\tau^c,\psi)^d  \sum_{{\mathrm{monic}\, m(u) \in \mathbb{F}_q[u],}\atop {\deg m(u)=d}}\, \phi(D) \tau(D^c\, m(0)^a m(1)^b).
$$
We note that this is exactly the sum studied by Evans. 
\begin{thm}[Evans, \cite{Eva}]
If none of $a+b+(d-1+j)c$, $0 \leq j \leq d-1$, are divisible by $q-1$, then, up to a sign, we have

\begin{align}  \nonumber
&\sum_{{\mathrm{monic}\ m(u) \in \mathbb{F}_q[u],}\atop {\deg m(u)=d}}
\, \tau(D)^{\frac{q-1}{2}} \tau(D^c\, m(0)^a m(1)^b)
\\ \nonumber
&\qquad =
\prod\limits_{j=0}^{d-1}
\dfrac{
G_q(\tau^{a+jc},\psi)
G_q(\tau^{b+jc},\psi)
G_q(\tau^{c+jc},\psi)
\bar{G}_q(\tau^{a+b+(d-1+j)c},\psi)
}{
q\,G_q(\tau^c,\psi)
}.
\end{align}

where $\bar{G}_q$ denotes the complex conjugate of the Gauss sum. 
\end{thm}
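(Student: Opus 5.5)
The plan is to follow the strategy Anderson used for the classical Selberg integral, carried over to finite fields, and to prove the identity by induction on $d$. The base case $d=1$ is immediate. There $D=1$ and $m(u)=u-t$, so the sum becomes a Jacobi sum
$$
\sum_t \tau^a(-t)\,\tau^b(1-t)=J(\tau^a,\tau^b)
$$
up to a sign. The classical identity $J=G(\tau^a)G(\tau^b)/G(\tau^{a+b})$, together with $G\bar G=q$ (valid since $\tau^{a+b}\neq 1$), gives the stated product for $d=1$.

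For the inductive step I would introduce an auxiliary double sum over pairs $(m,p)$. Here $m$ is monic of degree $d$, $p$ is monic of degree $d+1$, and the summand is
$$
\tau^{a'}(m(0))\,\tau^{b'}(m(1))\,\tau^{c}(\mathrm{res}(m,p))\,\tau^{a''}(p(0))\,\tau^{b''}(p(1)),
$$
with the exponents chosen so that the $\tau(\mathrm{res})$ factor interpolates between the two degrees.

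\textbf{Summing over $m$ with $p$ fixed.} I would expand $m/p$ into partial fractions over the roots of $p$. For squarefree $p$ this is the interpolation formula used in the proof of Theorem \ref{ndth}. By the Chinese remainder theorem, the space of $m$ identifies with $R_p=\mathbb{F}_q[u]/(p)$, subject to one affine condition from monicity. Then $\mathrm{res}(m,p)$ becomes a product of norms $N_i(m(\alpha_i))$, and $m(0),m(1)$ become $p(0)\cdot\sum_i \mathrm{Tr}_i\big(m(\alpha_i)/(p'(\alpha_i)\,\alpha_i)\big)$ and its analogue at $1$. The resulting sum is a multivariable Jacobi sum over $\prod_i \mathbb{F}_{q^{d_i}}$. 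I would evaluate it with Gauss sums and Hasse--Davenport, exactly as in Theorem \ref{ndth}. The output should be a product of Gauss sums times $\tau$-powers of $D(p)$, $p(0)$, $p(1)$, multiplied by $\phi(D(p))$ through the sign $(-1)^{d+1-r}$ established in that proof. For nonsquarefree $p$, I expect the inner sum to vanish or to reduce by the nilpotent-ideal argument used there.

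\textbf{Summing over $p$ with $m$ fixed.} Symmetrically, I would write $p=(u-s)m+$ a remainder of degree $<d$ and decompose over $R_m$. This yields Gauss sums times $\tau$-powers of $D(m)$, $m(0)$, $m(1)$. Equating the two evaluations expresses the Selberg sum in degree $d+1$, with parameters $(a,b,c)$, through the degree-$d$ sum with shifted parameters. Feeding in the induction hypothesis and simplifying with $G(\chi)\bar G(\chi)=\chi(-1)q$ produces the telescoping product in the statement.

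I expect the main obstacle to be the degenerate contributions: nonsquarefree $p$ or $m$, roots at $0$ or $1$, and characters that become trivial along the way. These are exactly where the hypothesis $q-1\nmid a+b+(d-1+j)c$ must enter, so that Gauss sums of trivial characters (equal to $-1$) never appear and the vanishing arguments apply. I would first push the squarefree part through carefully. I would then try to show that the remaining terms cancel, using the $J$-coset argument from the proof of Theorem \ref{ndth} wherever an additive character is present; where only multiplicative characters remain, I would use orthogonality of the nontrivial character $\tau^{a+b+\dots}$. Signs would be tracked only up to $\pm1$, as the statement allows.
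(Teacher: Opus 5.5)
The paper gives no proof of this statement; it is quoted from Evans. So your argument has to stand on its own, and as written it has two genuine gaps.

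\textbf{The sum over $m$ with $p$ fixed is not a Jacobi sum.} The first gap is in this step, once the factors $\tau^{a'}(m(0))\tau^{b'}(m(1))$ are present. Interpolation at the $d+1$ roots of $p$ makes the values $m(\alpha_i)$ coordinates on an affine hyperplane. But $m(0)$ and $m(1)$ are two further linear forms in these coordinates, so you are summing characters of $d+3$ linear forms over a $d$-dimensional space. Already for $d=1$ the inner sum is $\sum_t\tau^{a'}(-t)\tau^{b'}(1-t)\tau^{c}(p(t))$. For quadratic characters this is essentially minus the Frobenius trace of the genus-one curve $y^2=t(1-t)p(t)$, whose absolute value varies with $p$. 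By contrast, any expression of the form ``Gauss sums times characters of $D(p),p(0),p(1)$'' has constant absolute value on squarefree $p$ with $p(0)p(1)\neq0$.

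You get a genuine Jacobi sum, meaning a single affine constraint on the node values, only when the number of nodes exceeds the degree of the summed polynomial by one. So the boundary characters must sit on $p$ alone. Take
\[
T=\sum_{m,p}\tau^{c}(\mathrm{res}(m,p))\,\tau^{a}(p(0))\,\tau^{b}(p(1)).
\]
Then either fix $p$ and interpolate $m$ at the $d+1$ roots of $p$, or fix $m$ and interpolate $p$ at the $d+2$ roots of $u(u-1)m$. Write $\Sigma_n(a,b,c)$ for the left-hand side of the statement in degree $n$. The squarefree terms give, up to sign,
\[
\frac{G_q(\tau^{c},\psi)^{d+1}}{G_q(\tau^{(d+1)c},\psi)}\,\Sigma_{d+1}(a,b,c)=\frac{G_q(\tau^{a},\psi)\,G_q(\tau^{b},\psi)\,G_q(\tau^{c},\psi)^{d}}{G_q(\tau^{a+b+dc},\psi)}\,\Sigma_{d}(a+c,b+c,c).
\]
This telescopes to the stated product. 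The hypothesis for $\Sigma_{d+1}(a,b,c)$ implies the one for $\Sigma_d(a+c,b+c,c)$.

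\textbf{The degenerate terms do not all drop out.} The hypothesis $q-1\nmid a+b+(d-1+j)c$ controls only one direction. When $m$ is fixed, the total character of the Jacobi sum is $\tau^{a+b+dc}\neq1$. Hence at a repeated node some local character is nontrivial, and non-squarefree $u(u-1)m$ contribute nothing, as you expect.

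When $p$ is fixed, however, the total character is $\tau^{(d+1)c}$, and nothing is assumed about it. The statement itself allows $\tau^{c+jc}=1$, in which case $G_q(1,\psi)=-1$ appears in the product. So your expectation that trivial-character Gauss sums never occur is contradicted by the formula. If $\tau^{(d+1)c}=1$:
\begin{itemize}
\item A squarefree $p$ gives $\pm q^{-1}G_q(\tau^c,\psi)^{d+1}\phi(D(p))\tau^c(D(p))$, not the constant displayed above.
\item For $c\equiv0$ a squarefree $p$ gives $q^{-1}\bigl(|R_p^{\times}|+(-1)^d\phi(D(p))\bigr)$, which is not even proportional to $\phi(D(p))$.
\item For $p=h^{o}$, with $o\geq2$ the order of $\tau^c$, the inner sum is $\#\{m:\gcd(m,h)=1\}=q^{d-\deg\mathrm{rad}(h)}|R^{\times}_{\mathrm{rad}(h)}|\neq0$.
\end{itemize}
The nilpotent-coset argument only kills the $t\neq0$ terms in the additive expansion of the monicity condition. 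The surviving $t=0$ term involves only the characters $\tau^{ce_i}$, all trivial here, and $\tau^{a+b+\cdots}$ never enters.

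A concrete case: take $d+1=2$ and $\tau^c=\phi$, and let $J=\sum_s\tau^{2a}(s)\tau^{2b}(1-s)\neq0$. The squarefree $p$ contribute $\pm J$, while the squares $(u-s)^2$ contribute $(q-1)J$.

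So the repaired induction proves the statement only when $\tau^{nc}\neq1$ for $2\leq n\leq d$. Cases such as $c\equiv0$, or $\tau^c=\phi$ with $d\geq2$, need a separate argument. For the perfect-power contributions one could try summing them through an Euler product. For $c\equiv0$ one could use $\phi(D(m))=\prod_{\ell\mid m}(-1)^{\deg\ell-1}$ for squarefree $m$.
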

Using this and the identity $\bar{G}_q(\chi,\psi) G_q(\chi,\psi) =q$, we arrive at
\begin{thm}
The superpotential sum over the Coulomb branch equals, up to a sign,
$$
S= \prod\limits_{j=0}^{d-1}\, \dfrac{G_q(\tau^{a+jc},\psi)G_q(\tau^{b+jc},\psi)G_q(\tau^{c+jc},\psi) }{{G}_q(\tau^{a+b+(d-1+j)c},\psi)}.
$$
\end{thm}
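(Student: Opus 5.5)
The plan is to sum the fiberwise formula of Theorem \ref{thmfibdef} over the base $B(\mathbb{F}_q)$ and then apply Evans' evaluation. By the definition (\ref{mirsum}) and the projection (\ref{propi}), $S=\sum_m S(m(u))$, where $m$ runs over monic polynomials of degree $d$ with $m(0)\neq 0$. The condition $m(0)\ne 0$ can be dropped once $a\not\equiv 0 \pmod{q-1}$, since then the factor $\tau(m(0))^a$ vanishes at $m(0)=0$ by the convention $\tau(0)=0$. This condition is automatic in the regime where Evans' hypothesis applies.

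Next I rewrite each fiber term in the Selberg shape. For $f(u)=u^\alpha(1-u)^\beta$, multiplicativity of the resultant gives
$$\mathrm{res}(f,m)=\pm\, m(0)^\alpha m(1)^\beta.$$
Writing $\chi\chi'=\tau^c$ and $\chi^{-1}=\tau^{e}$, Theorem \ref{thmfibdef} becomes, up to the sign $(\chi\chi')(-1)^{d(d-1)/2}$ and a sign coming from the resultant,
$$S(m)=G_q(\tau^c,\psi)^d\,\phi(D)\,\tau\bigl(D^c\,m(0)^a m(1)^b\bigr),$$
with $a\equiv e\alpha$ and $b\equiv e\beta \pmod{q-1}$. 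Here $\phi=\tau^{(q-1)/2}$ for odd $q$. The factor $G_q(\tau^c,\psi)^d$ does not depend on $m$, so it comes out of the sum, and the remaining sum is exactly Evans' Selberg sum.

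Then I substitute Evans' theorem, assuming its nondivisibility hypothesis. The denominator contributes $\prod_j q\,G_q(\tau^c,\psi)=q^dG_q(\tau^c,\psi)^d$, and this cancels the prefactor $G_q(\tau^c,\psi)^d$. For the factor $\bar G_q(\tau^{k},\psi)$ with $k=a+b+(d-1+j)c\not\equiv 0$, I use $\bar G_q G_q=q$, valid for a nontrivial character, to get $\bar G_q(\tau^k,\psi)=q/G_q(\tau^k,\psi)$. The resulting $q^d$ cancels the remaining $q^d$ in the denominator, which yields the stated product.

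The only delicate point is bookkeeping rather than substance. One has to check the following:
\begin{itemize}
\item the exponents $a,b,c$ produced by Theorem \ref{thmfibdef} match Evans' normalization, including the inversion $\chi^{-1}$;
\item the numerator factors $G_q(\tau^{a+jc},\psi)$ etc.\ may involve trivial characters, which is harmless since they are simply carried along;
\item all signs have been collected consistently.
\end{itemize}
The statement is only claimed up to a sign, so the sign tracking can be left coarse.
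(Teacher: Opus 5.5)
Your argument is the paper's own: sum the fiber formula of Theorem~\ref{thmfibdef} over $m$, pull out $G_q(\tau^c,\psi)^d$ (all signs being independent of $m$), insert Evans' evaluation, and use $\bar G_q G_q=q$. That last step is legitimate exactly because Evans' hypothesis makes each $\tau^{a+b+(d-1+j)c}$ nontrivial.

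One side remark is wrong, however. Evans' hypothesis does not force $a\not\equiv 0 \pmod{q-1}$: for trivial $\chi$ you get $a\equiv b\equiv 0$, while the hypothesis only asks $(d-1+j)c\not\equiv 0$, which can hold. So that is not why the terms with $m(0)=0$ may be added. The correct reason is the paper's convention that every character, the trivial one included, is extended by $0$. Under it, $\tau^{a}(m(0))$ vanishes at $m(0)=0$ for every $a$.
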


\section{Point count for $\mathcal{M}^{\times}_{f,d}(\mathbb{F}_{q}$) \label{pointcountse}} 
Let
 $|\mathcal{M}^{\times}_{f,d}(\mathbb{F}_{q})|$ be the number of $\mathbb{F}_{q}$-points of the Coulomb branch, and let 
$$
\mathcal{Z}_f(z) =  1+ \sum\limits_{d=1}^{\infty} |\mathcal{M}^{\times}_{f,d}(\mathbb{F}_{q})| z^d
$$
be the corresponding generating function. For a prime $p \in \mathbb{F}_q[x]$ we denote by $\nu_{p}(f)$ the corresponding valuation of $f$ (i.e., the power of $p$ appearing in the prime factorization of $f$).
\begin{thm}
 \bean \label{pcthm}
 \mathcal{Z}_f(z)= \dfrac{(1-z q)^2}{(1-z) (1-z q^2)}\, \prod\limits_{p\mid f, \atop {p\neq x}} \dfrac{1-(q z )^{(\nu_{p}(f)+1)\deg(p)}}{1-(q z)^{\deg(p)}}.
\eean
where the product runs over primes $p\in \mathbb{F}_q[x]$, $p\neq x$, which divide $f$. 

\end{thm}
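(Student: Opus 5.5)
The plan is to write $|\mathcal{M}^{\times}_{f,d}(\mathbb{F}_q)|=\sum_m N(m)$, where $m$ runs over monic polynomials of degree $d$ with $m(0)\neq 0$ and
$$N(m)=\#\{(a,b)\in R_{m}^2:\ ab=\bar f\}.$$
This uses the Corollary to Proposition \ref{proprel}, which identifies $\pi^{-1}(m)$ with this set. The $d=0$ term is $1$, matching the constant term of $\mathcal{Z}_f$. By the Chinese remainder theorem $R_m\cong\prod_{p^k\| m}R_{p^k}$, and the equation $ab=\bar f$ splits componentwise, so $N$ is multiplicative in $m$. Since unique factorization holds in $\mathbb{F}_q[x]$ and the condition $m(0)\neq 0$ just means $x\nmid m$, this gives an Euler product
$$\mathcal{Z}_f(z)=\prod_{p\neq x}L_p(z),\qquad L_p(z)=\sum_{k\geq 0}N(p^k)\,z^{k\deg p}.$$

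\textbf{Local count.} Fix $p$ with $e=\deg p$, and put $Q=q^e$, $w=z^e$, $v=\nu_p(f)$. The ring $R_{p^k}$ is a finite local principal ideal ring with residue field $\mathbb{F}_Q$, uniformizer $p$ and length $k$. Elements of valuation $i<k$ number $Q^{k-i}(1-Q^{-1})$, and $0$ is the only element of valuation $k$. I would count solutions of $ab=\bar f$ by the valuation $i$ of $a$.
\begin{itemize}
\item If $v\geq k$, then $\bar f=0$. For each $a$ of valuation $i$, the admissible $b$ form the ideal of valuation $\geq k-i$, which has $Q^{i}$ elements. Summing over $i$ gives $N(p^k)=(k+1)Q^k-kQ^{k-1}$.
\item If $v<k$, then $i$ must equal at most $v$. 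For each such $a$, the element $b$ is determined modulo $p^{k-i}$, so there are $Q^i$ choices. This gives $N(p^k)=(v+1)(Q^k-Q^{k-1})$.
\end{itemize}
In particular, for $p\nmid f$ we get $N(p^k)=|R_{p^k}^\times|$, and
$$L_0:=1+\sum_{k\geq1}(Q^k-Q^{k-1})w^k=\frac{1-w}{1-Qw}.$$

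\textbf{Key step: closed form of $L_p$.} This is the step I expect to need the most care, namely evaluating $L_p$ when $v>0$. I would denote the local factor by $L_v$ and compare $L_v$ with $L_{v-1}$ coefficient by coefficient.
\begin{itemize}
\item For $k<v$ the coefficients agree.
\item At $k=v$ the difference is $(v+1)Q^v-vQ^{v-1}-v(Q^v-Q^{v-1})=Q^v$.
\item For $k>v$ the difference is $Q^k-Q^{k-1}$.
\end{itemize}
Hence $L_v-L_{v-1}=(Qw)^vL_0$. Telescoping then gives
$$L_v=L_0\bigl(1+Qw+\dots+(Qw)^v\bigr)=L_0\,\frac{1-(Qw)^{v+1}}{1-Qw}.$$

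\textbf{Assembling the product.} It remains to multiply the factors $L_0$ over all $p\neq x$. The zeta function of $\mathbb{F}_q[x]$ gives $\prod_p(1-z^{\deg p})^{-1}=(1-qz)^{-1}$. Applying this at $z$ and at $qz$ yields
$$\prod_{\text{all } p}\frac{1-z^{\deg p}}{1-(qz)^{\deg p}}=\frac{1-qz}{1-q^2z}.$$
Removing the factor for $p=x$, which is $(1-z)/(1-qz)$, leaves
$$\prod_{p\neq x}L_0=\frac{(1-qz)^2}{(1-z)(1-q^2z)}.$$
Finally, each $p\neq x$ dividing $f$ contributes the extra factor $\dfrac{1-(qz)^{(\nu_p(f)+1)\deg p}}{1-(qz)^{\deg p}}$, which yields (\ref{pcthm}).
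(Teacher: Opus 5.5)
Your proposal is correct and follows the paper's proof step by step: the CRT Euler product over primes $p\neq x$, the local counts $N(p^k)$ organized by the valuation of $a$ (which agree with the paper's fiber-size formula), and the zeta-function identity $\prod_p(1-z^{\deg p})^{-1}=(1-qz)^{-1}$ applied at $z$ and at $qz$. The one addition is your telescoping identity $L_v-L_{v-1}=(Qw)^vL_0$, which spells out the ``direct calculation'' of $L_{f,p}/L_{1,p}$ that the paper leaves to the reader.
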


\begin{proof}
Let $N_f(m)$ be the number of $\mathbb{F}_q$-points in the fiber $\pi^{-1}(m)$, so that
 $$
 |\mathcal{M}^{\times}_{f,d}(\mathbb{F}_{q})| = \sum_{\deg (m) =d }\, N_f(m),
 $$
 and 
$$
  \mathcal{Z}_f(z)=  \sum_{m}\, N_f(m) z^{\deg(m)},
$$
where the sum is over all monic $m$ with non-zero constant term. By definition,
$$
N_f(m) = |\{(a,b) \in R^2_{m}: a b =\bar{f} \}|,
$$
where $\bar{f}$ denotes the image of $f$ in $R_{m}=\mathbb{F}_q[u]/(m)$. By the Chinese remainder theorem, if $m$ has prime factorization 
$m = p_1^{e_1} \cdots p_l^{e_l}$, then
$
R_{m} = R_{p_1^{e_1}} \times \cdots  \times R_{p_l^{e_l}}
$
and $N_f(m) = N_f(p_1^{e_1}) \cdots N_f(p_l^{e_l})$. Thus,
$$
  \mathcal{Z}_f(z)= \prod_{{p = \mathrm{prime}} \atop {p\neq x}} L_{f,p}(z),
$$
where the local factor is
$$
L_{f,p}(z) = 1+\sum\limits_{e=1}^{\infty}\, N_f(p^e) z^{e \deg(p)}.
$$
The condition $p\neq x$ above is the same as the non-vanishing of the constant term. 

From (\ref{fibsize}) below, for $f=1$ we obtain $N_{1}(p^e) = (q^{\deg(p)}-1) q^{(e-1) \deg(p)}$ and therefore 
$$
L_{1,p}(z)= \dfrac{1-z^{\deg(p)}}{1-q^{\deg(p)} z^{\deg(p)} }.
$$
By the Euler product formula,
$$
\prod\limits_{{p = \textrm{prime}}}\, \dfrac{1}{1-z^{\deg(p)}} = \dfrac{1}{1- q z}.
$$
Thus,
$$
\mathcal{Z}_1(z) =  \prod\limits_{{p = \textrm{prime},}\atop {p\neq x} } \, \dfrac{1-z^{\deg{p}}}{1-q^{\deg{p}} z^{\deg{p}} } = \dfrac{1-z q}{1-z}  \prod\limits_{{p = \textrm{prime}}} \, \dfrac{1-z^{\deg{p}}}{1-q^{\deg{p}} z^{\deg{p}} } =\dfrac{(1-z q)^2}{(1-z) (1-z q^2)}.
$$
Assume $\nu_{p}(f)=k$. Then, from (\ref{fibsize}), we obtain
$$
L_{f,p} = 1+ \sum\limits_{e=1}^{k} \Big(q^{e d} + e(q^{d}-1) q^{d(e-1)}\Big) z^{d e} +\sum\limits_{e=k+1}^{\infty}\, (k+1) (q^{d}-1) q^{(e-1)d} z^{d e},   
$$
where $d=\deg p$. A direct calculation gives
$$
\dfrac{L_{f,p}(z)}{L_{1,p}(z)} = 1+ z^{d} q^{d} +\dots+ z^{k d} q^{k d} = \dfrac{1-z^{d (k+1)} q^{d(k+1)} }{1-z^{d} q^{d}}.
$$
Thus,
$$
\mathcal{Z}_f(z)= \prod\limits_{p=\textrm{prime} \atop {p\neq x}} \, L_{f,p}(z) = \Big(\prod\limits_{p=\textrm{prime} \atop {p\neq x}} \, L_{1,p}(z) \Big)\Big(  \prod\limits_{p=\textrm{prime} \atop {p\neq x}} \dfrac{1-(q^{\deg(p)} z^{\deg(p)} )^{\nu_p(f)+1}}{1-q^{\deg(p)} z^{\deg(p)}}\Big). 
$$
The first product here is $\mathcal{Z}_{1}(z)$ computed above, and all factors in the second product are equal to $1$ except those for which $p\mid f$. This finishes the calculation. 

\end{proof}

\begin{lem}
Let $p\in \mathbb{F}_q[x]$ be a prime and let $R_{p^e} = \mathbb{F}_q[x]/(p^e)$. Then
$$
 |\{a\in R_{p^e}: \nu_p(a)=r\}| = (q^d-1) q^{d(e-r-1)} 
$$
where $d=\deg(p)$.
\end{lem}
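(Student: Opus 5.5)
The count runs through the discrete valuation ring structure of $R_{p^e}=\mathbb{F}_q[x]/(p^e)$, for $0\le r\le e-1$.

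First, since $p$ is prime, $R_{p^e}$ is a local ring with maximal ideal $(p)/(p^e)$ and residue field $R_p=\mathbb{F}_q[x]/(p)\cong\mathbb{F}_{q^d}$, where $d=\deg p$. Every element has a representative of degree $<de$, and $|R_{p^e}|=q^{de}$. The valuation $\nu_p(a)$ of a nonzero class is well defined as the exponent of $p$ in any representative, provided it is less than $e$. The zero class is assigned $\nu_p=e$, or equivalently $\nu_p(a)\ge e$.

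Next, I would identify the set $\{a:\nu_p(a)\ge r\}$ with the ideal $p^rR_{p^e}$. Multiplication by $p^r$ gives an isomorphism $R_{p^{e-r}}\xrightarrow{\sim} p^rR_{p^e}$ of $\mathbb{F}_q[x]$-modules. It is injective because $p^r c\equiv 0 \pmod{p^e}$ forces $p^{e-r}\mid c$. Hence $|p^rR_{p^e}|=q^{d(e-r)}$.

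The set of elements with valuation exactly $r$ is the difference $p^rR_{p^e}\setminus p^{r+1}R_{p^e}$. Its size is
$$
q^{d(e-r)}-q^{d(e-r-1)}=(q^d-1)\,q^{d(e-r-1)}.
$$
Equivalently, these elements are $p^r u$ with $u$ a unit of $R_{p^{e-r}}$, and there are $(q^d-1)q^{d(e-r-1)}$ such units.

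The only point needing care is the well-definedness of $\nu_p$ on residue classes modulo $p^e$, together with the convention for the zero class ($r=e$, where the formula is not meant to apply). Both are immediate from unique factorization in $\mathbb{F}_q[x]$, so I expect no real obstacle.

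This lemma then feeds the fiber count (\ref{fibsize}). For $f=1$, the case $r=0$ gives $N_1(p^e)=(q^d-1)q^{d(e-1)}$. In the general case one sums over the valuations of $a$ with $\nu_p(a)\le\nu_p(\bar f)$, and for each admissible $a$ counts the solutions $b$ of $ab=\bar f$.
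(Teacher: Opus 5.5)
Your argument is correct and is essentially the paper's proof: identify $\{a:\nu_p(a)\ge r\}$ with the ideal $p^rR_{p^e}\cong\mathbb{F}_q[x]/(p^{e-r})$, so that it has $q^{d(e-r)}$ elements, and take the difference $|p^rR_{p^e}|-|p^{r+1}R_{p^e}|$. Your extra care about the well-definedness of $\nu_p$ on residue classes and the range $0\le r\le e-1$ fills in details the paper leaves implicit. One small wording point: $R_{p^e}$ is a local Artinian quotient of a discrete valuation ring, not a discrete valuation ring itself.
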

\begin{proof}
Consider the ideal $p^r R$. Note that $p^r R \cong \mathbb{F}_q[x]/(p^{e-r})$, from which we obtain
\bean \label{idealsize}
|p^r R_{p^e}| =q^{d(e-r)}.
\eean 
Thus,
$$
 |\{a\in R_{p^e}: \nu_p(a)=r\}| =  |\{a\in R_{p^e}: \nu_p(a)\geq r\}| - |\{a\in R_{p^e}: \nu_p(a)\geq r+1\}|
 = |p^r R_{p^e}|-|p^{r+1} R_{p^e}|
$$
\end{proof}

\begin{thm} We have
\bean \label{fibsize}
N_f(p^e)  = \left\{\begin{array}{ll}
(s+1) (Q-1) Q^{e-1}, & s<e,\\
Q^e + e(Q-1) Q^{e-1}, & s=e
\end{array}\right.
\eean 
where $s=\min(\nu_p(f),e)$ and $Q=q^{\deg(p)}$.    
\end{thm}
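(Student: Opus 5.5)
We count pairs $(a,b)\in R_{p^e}^2$ with $ab=\bar f$ by stratifying according to the valuation $r=\nu_p(a)\in\{0,1,\dots,e\}$, where $\nu_p(0)=e$ by convention. Every element of $R=R_{p^e}$ has the form $a=p^r u$ with $u$ a unit, and this representation is unique modulo $p^{e-r}$. The image $\bar f$ has valuation $s=\min(\nu_p(f),e)$, so $\bar f=p^s v$ with $v$ a unit. The key step is to count the solutions $b$ of $ab=\bar f$ for a fixed $a$ of valuation $r$.

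Fix $a=p^r u$. Since $u$ is invertible, the equation $ab=\bar f$ is equivalent to $p^r b=u^{-1}\bar f$. So it suffices to understand the map $b\mapsto p^r b$, which is multiplication by $p^r$ on $R$. Its image is the ideal $p^rR$, and by (\ref{idealsize}) this image has $Q^{e-r}$ elements, so its kernel has $Q^{r}$ elements. The element $u^{-1}\bar f$ has valuation $s$, so it lies in $p^rR$ exactly when $s\geq r$ (when $s=e$ it is $0$, which lies in every ideal). Hence the number of $b$ is $Q^r$ if $r\le s$ and $0$ otherwise.

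Next, sum over $a$ using the preceding Lemma, which gives $(Q-1)Q^{e-r-1}$ elements of valuation $r$ for $r<e$, together with the single element $a=0$ for $r=e$.

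If $s<e$, the admissible valuations are $r=0,\dots,s$. Each contributes $(Q-1)Q^{e-r-1}\cdot Q^r=(Q-1)Q^{e-1}$, so the total is $(s+1)(Q-1)Q^{e-1}$.

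If $s=e$, then $\bar f=0$ and every $r\in\{0,\dots,e\}$ is admissible. The values $r=0,\dots,e-1$ together give $e(Q-1)Q^{e-1}$. The value $r=e$, i.e. $a=0$, contributes $Q^e$, since then every $b$ works.

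Adding these gives the two cases of (\ref{fibsize}). The only delicate points are the boundary cases: the element $a=0$, where the Lemma's formula does not apply, and the condition $r\le s$ when $\bar f=0$. Handling the $r=e$ term separately resolves both.
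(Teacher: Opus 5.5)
Your proposal is correct and follows essentially the same argument as the paper. Both stratify by $r=\nu_p(a)$, show that each admissible $r\le s$ contributes $Q^r$ solutions $b$, multiply by the Lemma's count $(Q-1)Q^{e-r-1}$, and treat $a=0$ separately when $\bar f=0$. The only difference is how the $Q^r$ is obtained: the paper writes the solutions as the coset $b_0+\ker\mu_a$ with $\ker\mu_a=p^{e-r}R$, whereas you read the count off from the size of the image $p^rR$.
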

\begin{proof}
First, assume $\bar{f} = p^s u_f$ with $s<e$ and $u_f \in R^{\times}$. Let us fix $a= p^r u_a \in R$ for $u_a \in R^{\times}$. We count solutions $b$ satisfying $a b =\bar{f}$. From this equation, we see that solutions exist only if $r\leq s$, i.e., $r=0,1,\dots, s$. The solutions $b$ of $a b =\bar{f}$ are of the form
 $$
 b=b_0 + \ker \mu_a
 $$
where $b_0 = p^{s-r} u_f/u_a$ is a particular solution and $\mu_a$ is the operator of multiplication by $a$ in $R$. 
Thus, the number of solutions is $|\ker \mu_a|$. We note that 
 $$
 \ker \mu_a = p^{e-r} R
 $$
thus, from (\ref{idealsize}), we see that $|\ker \mu_a|=Q^r$. From the lemma above, the number of $a$'s with valuation $r$ is $(Q-1) Q^{e-r-1}$. Thus, the total number of pairs is $(Q-1) Q^{e-1}$. Summing over all possible $r=0,1,.\dots s$ gives
 $$
 (s+1)(Q-1) Q^{e-1}
 $$
as needed. 

Next, if $s=e$ we have $\bar{f}=0$ and we count solutions $a b=0$. If $a\neq 0$, then, as above, $a=p^r u_s$ for $r=0,1,\dots, e-1$. Arguing exactly as above, we obtain 
$
e(Q-1) Q^{e-1}
$
solutions. If $a=0$, then any $b$ gives a solution, which gives $Q^e$ more solutions. The total number of solutions is
$$
Q^e+e(Q-1) Q^{e-1}
$$
as needed. 
\end{proof}

\end{document}